\newtheorem{lemma}{Lemma}[section]
\newcommand{\MATLAB}{\textsc{Matlab}\xspace}
\definecolor{mygreen}{RGB}{28,172,0} 
\definecolor{mylilas}{RGB}{170,55,241}
\title{PP-wave reflection coefficient for vertically cracked media: Single set of aligned cracks}
\author{Filip P. Adamus\footnote{
Department of Earth Sciences, Memorial University of Newfoundland, Canada, {\tt adamusfp@gmail.com}}\,\,\,}
\date{}
\begin{document}
\maketitle
\begin{abstract}
The main goal of this paper is to analyse the influence of cracks on the azimuthal variations of amplitude.
We restrict our investigation to a single set of vertical, circular, and flat cavities aligned along a horizontal axis.
Such cracks are embedded in either isotropic surroundings or transversely isotropic background with a vertical symmetry axis.
We employ the effective medium theory to obtain either transversely-isotropic material with a horizontal symmetry axis or an orthotropic medium, respectively.
To consider the amplitudes, we focus on a Vavry\v{c}uk-P\v{s}en\v{c}\'{i}k  approximation of the PP-wave reflection coefficient. 
We assume that cracks are situated  in one of the halfspaces being in welded contact.
Azimuthal variations depend on the background stiffnesses, incidence angle, and crack density parameter.
Upon analytical analysis, we indicate which factors (such as background's saturation) cause the reflection coefficient to have maximum absolute value in the direction parallel or perpendicular to cracks. 
We discuss the irregular cases, where such extreme values appear in the other than the aforementioned directions.
Due to the support of numerical simulations, we propose graphic patterns of two-dimensional amplitude variations with azimuth.
The patterns consist of a series of shapes that change with the increasing value of the crack density parameter.
Schemes appear to differ depending on the incidence angle and the saturation.
Finally, we extract these shapes that are characteristic of gas-bearing rocks. 
They may be treated as gas indicators.
We support the findings and verify our patterns using real values of stiffnesses extracted from the sedimentary rocks' samples.    
\end{abstract}
\section{Introduction}
Recently, the amplitude variations with azimuth (AVA) became a topic of interest for many seismologists. Such physical phenomena occur due to either intrinsic or induced azimuthal anisotropy of the medium. 
The\newpage latter kind is caused by thin layering or cracks embedded in the azimuthally-independent background, under the condition that at least one inhomogeneity is not parallel to the reference plane.
The properties of cracks that induce AVA are essential from the exploration point of view since cracks may cause fluids, such as gas, to flow.
Therefore, in this paper, we investigate the above-mentioned induced variations.
We refer to them, in a singular form, as CAVA to emphasise that AVA is caused by cracks---not the layering or intrinsic anisotropy.
To analyse such variations, we utilise effective medium theory and reflection coefficients.

The idea of homogenisation of elastic media containing inclusions dates back to~\citet{Bruggeman} and~\citet{Eshelby}. A good summary of the micromechanics researchers' efforts can be found in~\citet{Kachanov}.  
From a strictly geophysical perspective, the elastics of cracked media has been studied and popularised by~\citet{SchDouma},~\citet{SchSayers}, and~\citet{SchHelbig}.
They treat cracks as infinitely weak and thin planes embedded in the background medium.

For over a century, researchers have been interested in finding the formulation of reflection and transmission coefficients at an interface between two elastic halfspaces.
The exact solution to the plane wave reflection and transmission problem for isotropic media was shown by~\citet{Zoeppritz}.
An elegant extension of the explicit solution to monoclinic halfspaces was presented by~\citet{SchProtazio}.
Due to the complexity of analytical formulations for both isotropic and anisotropic cases, researchers focused on various reflection and transmission coefficients' approximations.
For the azimuthally-independent approximations, readers may refer to~\citet{AVO}.
If reflection and transmission coefficients do change with azimuth, the approximations become more complicated.
\citet{Ruger} proposed formulations for two transversely-isotropic halfspaces with a horizontal axis of symmetry.
However, his derivations appear inaccurate if lower symmetries are considered. 
Halfspaces with an arbitrary symmetry were discussed by~\citet{UrsinHaugen},~\citet{ZillmerEtAl}, and~\citet{VavrycukPsencik}.
To obtain the approximations,~\citet{UrsinHaugen} assumed weak elastic contrast at the interface only. 
On the other hand,~\citet{ZillmerEtAl} allowed the contrast to be strong but assumed weak anisotropy.
Both approximations are very complicated and lengthy.
More user-friendly derivations were shown by~\citet{VavrycukPsencik}, who assumed both weak contrast interface and weak anisotropy.
Upon introducing further linearisation, their approximations can be reduced to an elegant formulation shown by~\citet{PsencikMartins}, and then, it can be simplified to the aforementioned, popular approximation of~\citet{Ruger}.
Due to the convenience of analytical analysis and the accuracy of the approximation for low symmetry classes, we focus on the reflection coefficient estimated by~\citet{VavrycukPsencik}.
Further, we restrict our analysis to PP-plane waves.

Numerous authors have employed the effective medium theory in the context of azimuthal variations of the reflection coefficient.
Specifically, they often have used PP-amplitudes to predict the background and fracture parameters.
Thus, they have focused on the solution of the inverse problem.
However, due to a large number of unknowns, the azimuthal inversion becomes a difficult task.
Some authors have considered isotropic background and the aforementioned R{\"u}ger's equations~\citep[e.g.,][]{RugerGray}.
The others considered more sophisticated approximations, employing novel techniques to reduce the number of parameters, but still assuming an isotropic background~(\citet{RppInv}~and~\citet{derivative}).
In contrast to the authors mentioned above,~\citet{RppInvORT} have allowed the background to present lower than isotropic symmetry.
However, they have utilised an additional linearisation of the reflection coefficient.  
In this paper, we do not focus on the explicit inversion of background and fracture parameters.
This way, we can consider an anisotropic background and Vavry\v{c}uk-P\v{s}en\v{c}\'{i}k  approximation, with no additional linearisations or assumptions.
Using analytical and numerical methods, we try to better understand the nature of azimuthal variations of amplitude for cracked media.
Specifically, we analyse the shapes of these variations.
Thorough investigation allows us to notice not only the ellipse or peanut shapes, as assumed or indicated by the other authors~\citep[e.g.,][]{derivative}.
Further, some shapes occur to be more probable for specific saturations, incidences, and crack concentrations.
Therefore, we also touch on the inverse problem; however, in an indirect, not explicit way.
In other words, instead of focusing on the popular Bayesian framework, we investigate the patterns and characteristic attributes of CAVA.
Principally, we are interested in the shapes that are typical for gas-bearing rocks.
\section{Theory}
\subsection{Elasticity tensor and stability conditions}
Consider a three-dimensional Cartesian coordinate system with $x_i$ axes, where $x_3$ denotes the vertical axis.
An elasticity tensor is a forth-rank Cartesian tensor that relates stress and strain second-rank tensors. 
A material whose elastic properties are rotationally invariant about one symmetry axes is called to be transversely isotropic (TI).
This paper focuses on a TI medium with a rotation symmetry axis that coincides with the $x_3$-axis; we refer to such a medium as a VTI material.
In a Voigt notation, an elasticity tensor of a VTI material is represented by 
\begin{equation}\label{crack:end}
\bm{C}=
\begin{bmatrix}
C_{11}& C_{12} & C_{13} & 0 & 0 & 0\\
 C_{12}& C_{11}& C_{13}& 0 & 0 & 0\\
C_{13}& C_{13}& C_{33} & 0 & 0 & 0 \\
0 & 0 & 0 & C_{44} & 0 & 0\\
0 & 0 & 0 & 0 & C_{44} & 0\\
0 & 0 & 0 & 0 &0 &C_{66}\\
\end{bmatrix}\,,
\end{equation}  
where $C_{12}=C_{11}-2C_{66}$\,.
If additionally $C_{11}=C_{33}$\,, $C_{12}=C_{13}\,$, and $C_{44}=C_{66}$\,, medium becomes isotropic.
An elasticity tensor is physically feasible if it obeys the stability conditions.
These conditions \citep[e.g.,][Section 4.3]{SlawinskiRedNEW} originate from the necessity of expending energy to deform a material. 
This necessity is mathematically expressed by the positive definiteness of the elasticity tensor. 
A tensor is positive definite if and only if all eigenvalues of its $6\times6$ matrix representation are positive. 
For a VTI elasticity tensor, this entails 
\begin{equation}
C_{11}-|C_{12}|>0\,,\quad C_{33}(C_{11}+C_{12})> 2C_{13}^2\,,\quad {\rm and}\quad C_{44}>0\,.
\end{equation}
For isotropic tensor, only one condition is required, namely,
\begin{equation}
\label{ineq}
    C_{11}>\tfrac{4}{3}\,C_{44}>0\,.
\end{equation}
\subsection{Elastic medium with a single ellipsoidal inclusion}
Consider a linearly elastic material of volume $V$ that contains a single region (inclusion) with different elastic properties occupying volume $V_1$\,.
We denote the inhomogeneity surroundings with a subscript $0$\,, whereas $1$ indicates the embedded region. 
The strain of an effective (homogenised) material, averaged over volume $V$\,, can be expressed in terms of the strain average over the background surroundings $\overline{\bm{\varepsilon}}_0$\,, and strain average over the inhomogeneity $\overline{\bm{\varepsilon}}_1$\,, as follows.\begin{equation}\label{one}
\overline{\bm{\varepsilon}}=\frac{V-V_1}{V}\,\overline{\bm{\varepsilon}}_0+\frac{V_1}{V}\,\overline{\bm{\varepsilon}}_1=\phi_0\bm{S}_0\bm{:}\overline{\bm{\sigma}}_0+\phi_1\bm{S}_1\bm{:}\overline{\bm{\sigma}}_1\,,
\end{equation}
where $\bm{\varepsilon}$ stands for the second-rank strain tensor, $\bm{\sigma}$ denotes the second-rank stress tensor, $\bm{S}$ is the fourth-rank compliance tensor (the inverse of an elasticity tensor), and operator $\bm{:}$ is the double-dot product.
Volume fractions are denoted by $\phi$\,.
We assume an external stress at infinity that is uniform, namely,
\begin{equation}
\bm{\sigma}_\infty=\phi_0\overline{\bm{\sigma}}_0+\phi_1\overline{\bm{\sigma}}_1\,.
\end{equation}
The external stress can be related to the inhomogeneity stress solely, by using a fourth-rank stress concentration tensor $\bm{B}$\,, that is,
\begin{equation}\label{three}
\overline{\bm{\sigma}}_1=\bm{B}\bm{:}\bm{\sigma}_\infty\,.
\end{equation}
Combining equations~(\ref{one})--(\ref{three}), we get
\begin{equation}\label{extra}
\overline{\bm{\varepsilon}}=\bm{S}_0\bm{:}\bm{\sigma}_\infty+\phi_1(\bm{S}_1-\bm{S}_0)\bm{:}\bm{B}\bm{:}\bm{\sigma}_\infty=\bm{S}_0\bm{:}\bm{\sigma}_\infty+\Delta\bm{\varepsilon}\,,
\end{equation}
where $\Delta\bm{\varepsilon}$ is the extra strain of the material due to inclusion.
Now, we can define fourth-rank compliance contribution tensor
\begin{equation}\label{h}
\bm{H}:=(\bm{S}_1-\bm{S}_0)\bm{:}\bm{B}
\end{equation}
that is a key tensor since---upon inserting expression~(\ref{h}) into~(\ref{extra})---the contribution of an inhomogeneity can be expressed by $\bm{H}$ and volume fraction only. 
$\bm{H}$ depends on elastic properties of an inclusion and on $\bm{B}$ that, in turn, depends on inclusion's shape.
If we assume the ellipsoidal shape of the inhomogeneity, we can express $\bm{B}$ in terms of fourth-rank Eshelby tensor, $\bm{s}$\,, which leads to a significant simplification of the so-called Eshelby problem~\citep{Eshelby}. 
If different shapes are considered, the stresses and strain inside the inclusion are not constant. 
In turn, the contribution of such inclusions cannot be expressed in terms of Eshelby tensor, and more complicated techniques must be involved~\citep{KS}.
Herein, we assume ellipsoidal shape and get
\begin{equation}\label{b}
\bm{B}=\left[\bm{J}+\bm{Q}\bm{:}(\bm{S}_1-\bm{S}_0)\right]^{-1}\,,
\end{equation}
where $\bm{J}$ is the fourth-rank unit tensor---$\left(\delta_{ik}\delta_{j\ell}+\delta_{i\ell}\delta_{jk}\right)/2$\,, where $\delta$ is the Kronecker delta---and $\bm{Q}$ is the fourth-rank tensor related to Eshelby tensor, $\bm{s}$\,, namely,
\begin{equation}\label{q}
\bm{Q}=\bm{C}_0\bm{:}(\bm{J}-\bm{s})\,.
\end{equation}
$\bm{C}_0$ is the fourth-rank elasticity tensor of the background material.
Upon inserting expression~(\ref{b}) into~(\ref{h}), we obtain
\begin{equation}
\bm{H}=\left[\left(\bm{S}_1-\bm{S}_0\right)^{-1}+\bm{Q}\right]^{-1}\,.
\end{equation}
In the next section, we refer to the above derivations to discuss the contribution of multiple, circular, and flat cavities (dry cracks) embedded in a VTI background. 
\subsection{Elastic VTI medium with circular cracks}
If an embedded, single region is a flat (planar) crack, then the extra strain from equation~(\ref{extra}) can be written as
\begin{equation}\label{cracks}
\Delta\bm{\varepsilon}=\frac{S}{2V}\left(\bm{bn}+\bm{nb}\right)\,,
\end{equation} 
where $\bm{b}$ is the average---over crack surface $S$---displacement discontinuity vector, and $\bm{n}$ is the crack normal vector. In our notation, $\bm{bn}$ and $\bm{nb}$ are the outer products.
Assuming linear displacements, we can introduce a second-rank crack compliance tensor $\bm{Z}$\,, namely,
\begin{equation}\label{Z}
\frac{S}{V}\,\bm{b}=\bm{n}\cdot\bm{\sigma}_\infty\cdot\bm{Z}\,.
\end{equation}
Upon inserting expression~(\ref{Z}) into~(\ref{cracks}) and comparing the result to the extra strain from equation~(\ref{extra}), we get an equality
\begin{equation}
\bm{nZn}=\phi_1\bm{H}\,.
\end{equation}
We see that tensor $\bm{Z}$ is akin to ''fracture system compliance tensor" shown in celebrated papers of~\citet{SchSayers} and~\citet{SchHelbig}.
Since $\bm{Z}$ is symmetric, three principal directions of the crack compliance must exist.
Therefore, we can write
\begin{equation}
\bm{Z}=Z_N\bm{nn}+Z_{tt}\bm{tt}+Z_{ss}\bm{ss}\,,
\end{equation}
where $\bm{n}$\,, $\bm{t}$\,, and $\bm{s}$ are mutually orthogonal vectors.
If a crack is circular, it becomes rotationally invariant, so that $Z_{tt}=Z_{ss}=:Z_T$\,.
Since $\bm{nn}+\bm{tt}+\bm{ss}=\bm{I}$\,, we get
\begin{equation}\label{znzt}
\bm{Z}=Z_N\bm{n\,n}+Z_T(\bm{I}-\bm{n\,n})\,,
\end{equation}
where $\bm{n}$ is the crack normal.
Given the assumptions above, $Z_N$ and $Z_T$ completely determine the elastic contribution of a circular crack.
To obtain them, we need to compute $\bm{H}$ that depends on crack stiffnesses and Eshelby tensor.
In turn, Eshelby tensor depends on background stiffnesses and crack shape. 
Complicated computation of $\bm{s}$ for different symmetry classes of the background medium and shapes of inclusion is well-explained in \citet{Sev2005} and \citet{KS}. 
Let us derive $Z_N$ and $Z_T$ for \linebreak a VTI background.
If we assume that a circular crack is dry, meaning that its elasticity parameters are zero, we get
\begin{equation}\label{Zn}
Z_N=\frac{8c_3e}{3c_{1}\left(1-\dfrac{c_{13}^2}{c_1^2}\right)}\geq0\,,
\end{equation}
\begin{equation}\label{Zt}
Z_T=\frac{16e}{3c_{44}\left(c_2+c_3-c_4\right)}\geq0\,,
\end{equation}
where
\begin{equation}
\begin{gathered}
c_1:=\sqrt{c_{11}c_{33}}\,,\qquad c_2:=\sqrt{\frac{c_{66}}{c_{44}}}\,,\\
c_3:=\sqrt{\frac{(c_1-c_{13})(c_1+c_{13}+2c_{44})}{c_{33}c_{44}}}\,,\qquad
c_4:=\frac{2c_{44}c_{3}}{c_1+c_{13}+2c_{44}}\,.
\end{gathered}
\end{equation}
Throughout the paper, $c_{ij}$ denote the stiffnesses of a background medium, expressed in a Voigt notation.
Parameter
\begin{equation}\label{e}
e=\frac{ma^3}{V}\,,
\end{equation}
is the crack density with crack ratio $a$ and number of cracks $m$ (in this, single crack case, $m=1$).
The above results are identical to the ones of~\citet{Guo}.
If the background is isotropic, then expressions~(\ref{Zn}) and~(\ref{Zt}) reduce to
\begin{equation}\label{Zniso}
Z_N=\frac{4c_{11}e}{3c_{44}(c_{11}-c_{44})}\geq0\,,
\end{equation}
\begin{equation}\label{Ztiso}
Z_T=\frac{16c_{11}e}{3c_{44}(3c_{11}-2c_{44})}\geq0\,.
\end{equation}
Expressions~(\ref{Zniso}) and~(\ref{Ztiso}) were derived previously by numerous authors~\citep[e.g.,][]{Hudson80}.
To satisfy stability conditions---for dry circular cracks---$Z_N$ and $Z_T$ must be nonnegative, no matter if the background is VTI or isotropic.
 
So far, we have discussed a case of a single inhomogeneity surrounded by the background medium.
If there are multiple flat cracks, the strain equation~(\ref{extra}) can be generalised to
\begin{equation}\label{extra2}
\overline{\bm{\varepsilon}}=(\bm{S}_0+\Delta\bm{S})\bm{:}\bm{\sigma}_\infty\,,
\end{equation}
where
\begin{equation}\label{deltaS}
\Delta\bm{S}=\sum_{k=1}^m\phi_{k}\bm{H}_{(k)}=\sum_{k=1}^m\bm{n}_{(k)}\bm{Z}_{(k)}\bm{n}_{(k)}
\end{equation}
with $m$ being the total number of cracks.
Tensors $\bm{Z}_{(k)}$ apart from depending on the shape and properties of cracks, they also may depend on the interactions between the inhomogeneities. 
In this paper, however, we assume the non-interaction approximation (NIA), so that cracks are treated as they were isolated, and $\bm{Z}_{(k)}$ can be obtained in a manner discussed above.
If cracks have the same shape and orientation, $Z_N$ and $Z_T$ for each inhomogeneity do sum up, and the concentration of cracks is reflected in the parameter $e$ with $m>1$ (see expression~(\ref{e})).
The NIA is particularly useful for strongly oblate or planar cracks due to its good accuracy even for higher values of density parameter~\citep{GrechkaKachanov}. 
For flat cracks, the shape factors such as roughness can be ignored~\citep{KS}.
In the next section, we discuss a particular case of expression~(\ref{deltaS}).
%
\subsection{Elastic VTI medium with a single set of aligned vertical cracks}
Consider one set of flat cracks having identical circular shapes that are embedded in a VTI background medium. 
Assume that all cracks are dry and have the same orientation.
Upon combining expressions~(\ref{znzt})--(\ref{Zt}), we can rewrite expression~(\ref{deltaS}) as
\begin{equation}\label{crackss}
\Delta\bm{S}=\bm{n}\left[\frac{8c_3e}{3c_{1}\left(1-\dfrac{c_{13}^2}{c_1^2}\right)}\bm{nn}+\frac{16e}{3c_{44}\left(c_2+c_3-c_4\right)}\left(\bm{I}-\bm{nn}\right)\right]\bm{n}
\end{equation}
where $\bm{n}$ is a normal to the set of cracks and $e$ describes the crack concentration with $m>1$\,.
Having the above expression, we can obtain the effective elasticity tensor of a homogenised medium. 
Since, elasticity is the inverse of the compliance tensor, we need to examine
\begin{equation}\label{crack:end}
\bm{C}^{\rm{eff}}=\left(\bm{S}_0+\Delta\bm{S}\right)^{-1}\,.
\end{equation}
If a set of cracks is vertical, then effective elasticity tensor has monoclinic or higher symmetry~\citep{SchEtAl}.
We propose to consider a simpler case, where cracks have a normal parallel to the $x_1$-axis.
Then, the tensor exhibits at least orthotropic symmetry.
In a Voigt notation, the effective elasticity tensor has the following form.
\begin{gather}\label{crack:end}
\setlength{\arraycolsep}{2pt}
\renewcommand*{\arraystretch}{1.1}
\bm{C}^{\rm{eff}}=
\begin{bmatrix}
c_{11}(1-\delta_N) & c_{12}(1-\delta_N) & c_{13}(1-\delta_N) & 0 & 0 & 0\\
 c_{12}(1-\delta_N) & c_{11}\left(1-\delta_N\dfrac{c_{12}^2}{c_{11}^2}\right) & c_{13}\left(1-\delta_N\dfrac{c_{12}}{c_{11}}\right) & 0 & 0 & 0\\
c_{13}(1-\delta_N) & c_{13}\left(1-\delta_N\dfrac{c_{12}}{c_{11}}\right) & c_{33}\left(1-\delta_N\dfrac{c_{13}}{c_{11}c_{33}}\right) & 0 & 0 & 0 \\
0 & 0 & 0 & c_{44} & 0 & 0\\
0 & 0 & 0 & 0 & c_{44}\left(1-\delta_{T_1}\right) & 0\\
0 & 0 & 0 & 0 &0 &c_{66}\left(1-\delta_{T_2}\right)
\end{bmatrix}\,,
\raisetag{-0.03\baselineskip}
\end{gather}  
\normalsize
where $c_{12}=c_{11}-2c_{66}$\,.
Deltas are the combinations of crack compliances and background stiffnesses, namely,
\begin{equation}
\delta_N:=\frac{Z_Nc_{11}}{1+Z_Nc_{11}}
=
\frac{8c_{11}c_3e}{8c_{11}c_3e+3c_1\left(1-\dfrac{c_{13}^2}{c_1^2}\right)}
\,,
\end{equation}
\begin{equation}
\delta_{T_1}:=\frac{Z_Tc_{44}}{1+Z_Tc_{44}}
=
\frac{16e}{16e+3(c_2+c_3-c_4)}
\,,
\end{equation}
\begin{equation}
\delta_{T_2}:=\frac{Z_Tc_{66}}{1+Z_Tc_{66}}
=
\frac{16c_{66}e}{16c_{66}e+3c_{44}(c_2+c_3-c_4)}
\,.
\end{equation}
Note that if there are no cracks, the effective tensor reduces to the background; thus, it has five independent stiffnesses, instead of nine.
Dry cracks embedded in the background always increase certain compliances of the effective medium ($Z_N$ and $Z_T$ must be positive).
Therefore, the effective elastic properties are weaker than the properties of the background.
Analogously, effective stiffnesses can be derived for the isotropic background.

So far, in this section, we have discussed the limiting case of dry, circular, flat cracks.
In other words, we have considered the ellipsoids with one pair of the semi-axes of equal length and the third tending to zero $(a_1=a_2=a, \,a_3\rightarrow0)$\,.
The results of this section also can be a good approximation for strongly oblate spheroids (penny-shaped cavities).
In the case of such shape, one semi-axis is much smaller than the other two ($a_3\ll a$)\,.
As shown by~\citet{Sev2005}, the values of Eshelby tensor---compared to the case of $a_3\rightarrow0$\,---do not change significantly, which means that expression~(\ref{crackss}) is accurate enough.
Note that the volume fraction $\phi_i$ or aspect ratio ($a_3/a$) of penny-shaped cavities are very small so that they are irrelevant for their characterisation~\citep[e.g.,][]{Kachanov1994}.
The only useful concentration parameter is the crack density, whose values are limited by the accuracy of the NIA only.  
\subsection{Vavry\v{c}uk-P\v{s}en\v{c}\'{i}k approximation}
Consider a spherical coordinate system.
Vavry\v{c}uk-P\v{s}en\v{c}\'{i}k approximation for the PP-wave reflection coefficient---valid for at least monoclinic halfspaces being in welded contact---is
\small
\begin{gather}\label{vav}
\begin{aligned}
&R_{\rm{pp}}(\theta,\psi)= R_{\rm{ipp}}(\theta)+\frac{1}{2}\sin^2\theta\bigg\{
\left[\Delta\left(\frac{C_{23}+2C_{44}-C_{33}}{C_{33}}\right)-8\Delta\left(\frac{C_{44}-C_{55}}{2C_{33}}\right)\right]\sin^2\psi
\\
&+\Delta\left(\frac{C_{13}+2C_{55}-C_{33}}{C_{33}}\right)\cos^2\psi
+2\left[\Delta\left(\frac{C_{36}+2C_{45}}{C_{33}}\right)-4\Delta\left(\frac{C_{45}}{C_{33}}\right)\right]\sin\psi\cos\psi
\bigg\}
\\
&+\frac{1}{2}\sin^2\theta\tan^2\theta\bigg\{\Delta\left(\frac{C_{22}-C_{33}}{2C_{33}}\right)\sin^4\psi+\Delta\left(\frac{C_{11}-C_{33}}{2C_{33}}\right)\cos^4\psi
\\
&+\Delta\left(\frac{C_{12}+2C_{66}-C_{33}}{C_{33}}\right)\sin^2\psi\cos^2\psi+\Delta\left(\frac{C_{26}}{C_{33}}\right)\sin^3\psi\cos\psi+\Delta\left(\frac{C_{16}}{C_{33}}\right)\cos^3\psi\sin\psi
\bigg\}\,,\\ 
\end{aligned}
\raisetag{-0.03\baselineskip}
\end{gather}
\normalsize
where $\theta$ is the incidence angle measured from the $x_3$-axis and $\psi$ is the azimuthal angle measured from the $x_1$-axis towards the $x_2$-axis. $\Delta$ stands for the difference between elastic effective parameters of lower and upper halfspaces, respectively ($\Delta=w^\ell-w^u$\,, where $w$ is some parameter).
First term of the above approximation denotes the PP reflection coefficient between two slightly different isotropic media, proposed by~\citet{AkiRichards}. The rest of the terms in expression~(\ref{vav}) are the correction terms due to the anisotropy of the halfspaces.
The reflection coefficient of the isotropic part is
\begin{equation}
R_{\rm{ipp}}(\theta)=\frac{1}{2}\frac{\Delta \left(\rho\alpha\right)}{\overline{\rho\alpha}}+\frac{1}{2}\frac{\Delta\alpha}{\overline{\alpha}}\tan^2\theta-2\,\frac{\Delta\left(\rho\beta^2\right)}{\overline{\rho\alpha}^2}\sin^2\theta\,,
\end{equation}
where $\rho$ is the mass density, whereas $\alpha$ and $\beta$ are P and S wave velocities of the isotropic medium that can be chosen arbitrarily. 
We follow~\citet{VavrycukPsencik} who have defined $\alpha=\sqrt{C_{33}/\rho}$\,\,and\,$\beta=\sqrt{C_{55}/\rho}$\,.
The bar stands for the average properties between two halfspaces, for instance, $\overline{\alpha}=(\alpha^\ell+\alpha^u)/2$\,.
%
\section{CAVA conjecture}
In this section, we analyse the effect of a single set of dry, circular cracks on azimuthal variations of amplitude. 
We assume that cracks are vertical with a normal parallel to the $x_1$-axis, the background is VTI (in some cases isotropic), and we use Vavry\v{c}uk-P\v{s}en\v{c}\'{i}k  approximation.
Hence, to obtain the reflection coefficient, we insert effective elasticity parameters from matrix~(\ref{crack:end}) into expression~(\ref{vav}).
CAVA depends on the crack density parameter $e$\,, incidence angle, and  background stiffnesses.
To better understand and separate the effect of $e$ on azimuthal variations---at the end of the section---we fix $\theta$ and the elasticity parameters.
By doing so, we obtain $R_{pp}(\psi,e)$ that corresponds to various seismological situations.
We present $R_{pp}(\psi,e)$ as a series of two-dimensional polar graphs that change with increasing concentration of cracks. 
We conjecture what the most probable patterns of such series are.

It is useful to prove that $R_{pp}(\psi)$ expressed in polar coordinates has at least two-fold rotational symmetry.
In this way, one quadrant instead of entire graph may be examined, which facilitates the analysis.
Throughout the paper, we focus on the quadrant, where $\psi\in[0^\circ,90^\circ]$\,.
Consider following Lemma.
\begin{lemma}
The graph of PP reflection coefficient, $R_{pp}(\psi)$\,, computed for orthotropic medium using Vavry\v{c}uk-P\v{s}en\v{c}\'{i}k  approximation has at least two-fold symmetry, where $R_{pp}(\psi)$ is expressed in polar coordinates.
\end{lemma}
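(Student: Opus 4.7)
The plan is to substitute the orthotropic structure of both halfspaces into the Vavry\v{c}uk-P\v{s}en\v{c}\'{i}k formula~(\ref{vav}) and verify directly that the resulting function is invariant under $\psi \mapsto \psi + \pi$; this $\pi$-periodicity is precisely two-fold rotational symmetry of the polar plot.

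First I would recall that, under the alignment of the coordinate axes with the orthotropic symmetry planes (which is automatic here, since the normal to the single crack set is parallel to $x_1$ and the effective tensor~(\ref{crack:end}) is already in canonical orthotropic form), the stiffness components $C_{16}$, $C_{26}$, $C_{36}$, and $C_{45}$ all vanish in both halfspaces. Consequently, every term in~(\ref{vav}) proportional to $\Delta C_{45}$, $\Delta(C_{36}+2C_{45})$, $\Delta C_{16}$, or $\Delta C_{26}$ drops out. These are exactly the coefficients multiplying the mixed monomials $\sin\psi\cos\psi$, $\sin^3\psi\cos\psi$, and $\cos^3\psi\sin\psi$, i.e.\ those of odd total degree in $\sin$ and $\cos$ once combined.

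Next I would observe that the azimuthal functions remaining in the formula---namely $\sin^2\psi$, $\cos^2\psi$, $\sin^4\psi$, $\cos^4\psi$, and $\sin^2\psi\cos^2\psi$---are all even powers of $\sin\psi$ and $\cos\psi$. Under $\psi \mapsto \psi+\pi$, each factor of $\sin$ and $\cos$ flips sign, but any even power is unchanged. The isotropic term $R_{\rm ipp}(\theta)$ is independent of $\psi$ and hence trivially $\pi$-periodic. Summing the contributions gives the analytic identity $R_{pp}(\theta,\psi+\pi) = R_{pp}(\theta,\psi)$.

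Finally I would translate this into the geometric claim. The polar graph of a function $f(\psi)$ is the set of points $\bigl(f(\psi)\cos\psi,\,f(\psi)\sin\psi\bigr)$; a rotation by $\pi$ about the origin sends such a point to $\bigl(f(\psi)\cos(\psi+\pi),\,f(\psi)\sin(\psi+\pi)\bigr)$, which lies on the graph if and only if $f(\psi+\pi)=f(\psi)$. Applied to $f(\psi)=R_{pp}(\theta,\psi)$---or equivalently to $|R_{pp}|$, if one prefers to view the magnitude plot---this yields the two-fold rotational symmetry claimed in the Lemma. I do not foresee a real obstacle: the argument is essentially bookkeeping of which monomials survive. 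The only point demanding care is that orthotropic symmetry forces $C_{16}$, $C_{26}$, $C_{36}$, $C_{45}$ to vanish only in the frame aligned with the orthotropic symmetry planes, so one must verify that \emph{both} halfspaces are written in that common aligned frame before invoking~(\ref{vav}); in our setting this alignment is built in.
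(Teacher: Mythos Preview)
Your proposal is correct and rests on the same observation as the paper: for orthotropic halfspaces in the aligned frame the coefficients $C_{16},C_{26},C_{36},C_{45}$ vanish, so only even trigonometric monomials survive in~(\ref{vav}), and these are invariant under $\psi\mapsto\psi+\pi$. The paper's proof records one additional consequence you do not state: the same even-monomial structure also gives $R_{pp}(-\psi)=R_{pp}(\psi)$ (reflection about the $x_1$-axis), which together with the $\pi$-periodicity yields reflection about $x_2$ and hence four identical quadrants---this is what the paper actually uses when it restricts the subsequent analysis to $\psi\in[0^\circ,90^\circ]$. Your argument already implies this (even powers of $\sin\psi,\cos\psi$ are unchanged under $\psi\mapsto-\psi$ as well), so only a sentence is missing, not an idea.
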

\begin{proof}
The equality,
\begin{equation*}
R_{pp}(-\psi)=R_{ipp}+a_1\sin^2\psi+a_2\cos^2\psi+a_3\sin^4\psi+a_4\cos^4\psi+a_5\sin^2\psi\cos^2\psi=R_{pp}(\psi)\,,
\end{equation*}
where $a_i$ are constants, means that the graph is symmetric about polar $x_1$-axis that coincides with $\psi=0^\circ$\,.
Also,
\begin{equation*}
R_{pp}(\psi+180^\circ)=R_{pp}(\psi)\,;
\end{equation*}
the graph is symmetric with respect to the origin.
The above symmetries imply the symmetry about the $x_2$-axis that coincides with $\psi=90^\circ\,$.
Hence, the graph has at least two-fold symmetry and is represented by four identical quadrants.
\end{proof}

There are numerous possible shapes of azimuthal variations.
In general, we can distinguish two main kinds of CAVA graphs---regular or irregular.
Former one, assures that the pair of minimal and maximal value of $R_{pp}(\psi)$---where $\forall\psi\in[0^\circ,90^\circ]$---is obtained for the pair of azimuths $\psi=0^\circ$ and $\psi=90^\circ$\,.
The irregularity occurs if $\min R_{pp}(\psi)$ or $\max R_{pp}(\psi)$ is given for other azimuths.
Examples of both kinds of shapes---that we discuss in the next sections---are shown in Figure~\ref{fig:shapes}.
\begin{figure}[ht!]
\includegraphics[width=0.8\textwidth]{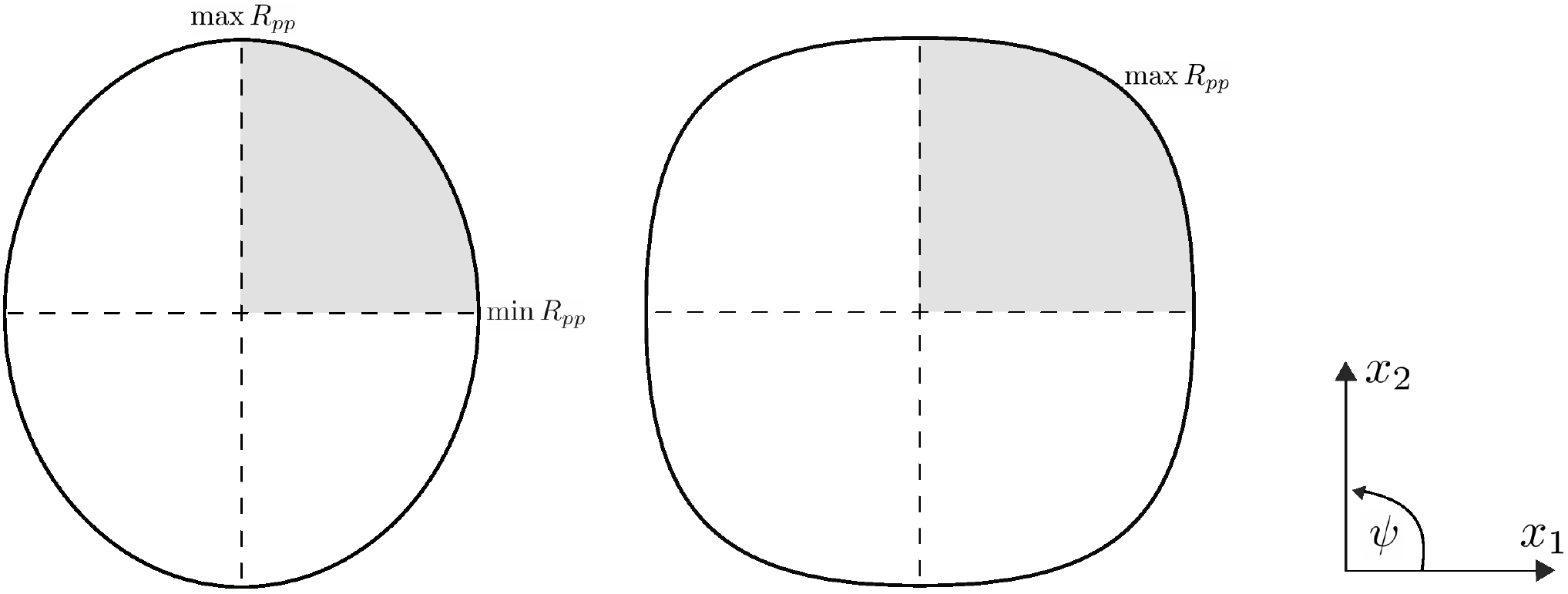}
\caption[\small{Regular and irregular azimuthal variations caused by cracks}]{\small{Two examples of azimuthal variations of amplitude caused by cracks (CAVA). Graphs have two-fold symmetry indicated by dashed lines.
The discussed quadrants are in grey. 
Graph on the left illustrates a regular shape, where $\min R_{pp}$ and $\max R_{pp}$ correspond to $\psi=0^\circ$ and $\psi=90^\circ$\,, respectively. The graph on the right presents an irregular shape, where $\max R_{pp}$ is obtained for other azimuth (in this case $\psi=45^\circ$). The coordinate system used herein is the same for all figures in the paper.  }}
\label{fig:shapes}
\end{figure} 
\subsection{Regular CAVA}\label{sec:reg}
Based on numerous works (for instance, summarised in~\citet{AVO}), we expect that cracks affect the amplitude in a most significant manner---meaning that $R_{pp}$ reaches its extreme values---while the wave propagates perpendicular or parallel to them.
Therefore, herein, we focus on the regular CAVA shape.
We propose to consider
\begin{equation}
\Delta R_{pp}:=R_{pp}(\theta,\psi_1)-R_{pp}(\theta,\psi_2)\,,
\end{equation}
where the azimuthal angles are $\psi_1=0^\circ$ and $\psi_2=90^\circ$\,.
We are particularly interested in the sign of $\Delta R_{pp}$\,. 
For instance, negative $\Delta R_{pp}$ means that amplitude has a larger value for a wave propagating parallel to cracks. 
We want to examine how much the sign is influenced by the concentration of cracks, incidence angle, or particular stiffnesses. 
Upon calculation, where we use the linearity of the difference $\Delta$\,(from expression~(\ref{vav})), we get 
\begin{equation}
\Delta R_{pp}=\frac{1}{2}\sin^2\theta\bigg\{\Delta\left[\frac{2\left(C_{44}-C_{55}\right)+\left(C_{13}-C_{23}\right)+\frac{1}{2}\tan^2\theta\left(C_{11}-C_{22}\right)}{C_{33}}\right]\bigg\}
\end{equation}
expressed in terms of effective elasticity parameters.
Without loss of generality, we choose the single set of cracks to be embedded in the lower halfspace so that $\Delta$ can be removed. 
We obtain
\begin{equation}\label{DRpp}
\Delta R_{pp}^\ell=\frac{1}{2C_{33}^\ell}\sin^2\theta\left[2\left(C_{44}^\ell-C_{55}^\ell\right)+\left(C_{13}^\ell-C_{23}^\ell\right)+\frac{1}{2}\tan^2\theta\left(C_{11}^\ell-C_{22}^\ell\right)\right].
\end{equation}
where $C_{ij}^\ell$ are the effective stiffnesses of the lower halfspace.
For small incidence angles, the third term in the numerator can be neglected.
Note that if we choose the single set of cracks to be embedded in the upper halfspace $\left(\Delta R_{pp}^u\right)$\,, the minus sign appears. 
Hence, all the conclusions regarding the behaviour of $\Delta R_{pp}^\ell$ have exactly the opposite meaning for $\Delta R_{pp}^u$\,.
Taking this into account, let us focus on $\Delta R_{pp}^\ell$ only.
Due to the weakening effect of embedded cracks (see expression~(\ref{crack:end})) and following the stability conditions, we require $C_{55}^\ell<C_{44}^\ell$\,, $\left|C_{13}^\ell\right|<\left|C_{23}^\ell\right|\,$, and $C_{11}^\ell<C_{22}^\ell$\,.
Thus, $\Delta R_{pp}^\ell$ can be either positive or negative, no matter the stiffness of the upper halfspace.
\newpage
It might be important to notice that, as opposed to phase velocities, one should not expect the magnitude of $R_{pp}$ to be larger for quasi P-waves propagating parallel to cracks. 
For instance, as shown by~\citet{Adamus}, the analogous difference between squared-velocities propagating parallel and perpendicular to cracks, for small incidence angles, is
\begin{equation}
s^2=V^2_P(\theta,\psi_1)-V^2_P(\theta,\psi_2)\approx \frac{k}{2}\sin^2\theta\left[2\left(C_{55}-C_{44}\right)+\left(C_{13}-C_{23}\right)\right]\,,
\end{equation}
where $k>0$ is a scaling factor that depends on $C_{33}$\,. 
The above expression is similar to~(\ref{DRpp})---where the third term responsible for large angles would be neglected---however, difference $C_{44}-C_{55}$ has the opposite sign.
Assuming non anomalous case of $C_{13},C_{23}>0$\,, we expect $s^2$ to be negative.
It means that velocity should be larger for a wave propagating parallel to cracks.
This is not the case for $R_{pp}\,$.

Let us express $\Delta R^\ell_{pp}$ in terms of VTI background elasticities of a lower halfspace (with a certain abuse of notation, denoted same as background stiffnesses of an arbitrary halfspace) and crack compliances $Z_N$ and $Z_T$\,.
We get
\begin{equation}
\begin{aligned}
\Delta R_{pp}^\ell=\sin^2\theta&\left[\frac{1+Z_Nc_{11}}{c_{33}+Z_N\left(c_{11}c_{33}-c_{13}^2\right)}\right]\\
&\,\,\,\,\qquad \left[\frac{c_{44}^2Z_T}{1+Z_Tc_{44}}-\frac{2c_{13}c_{66}Z_N}{1+Z_Nc_{11}}-\tan^2\theta\frac{\left(c_{11}c_{66}-c_{66}^2\right)Z_N}{1+Z_Nc_{11}}\right]
\end{aligned}
\end{equation}
that can be reduced to
\begin{equation}\label{crucial}
\Delta R_{pp}^\ell=k\sin^2\theta
\left[\chi c_{44}^2-2c_{13}c_{66}-\tan^2\theta\left(c_{11}c_{66}-c_{66}^2\right)\right]\,,
\end{equation}
where
\begin{equation}
k:=\frac{Z_N}{c_{33}+Z_N\left(c_{11}c_{33}-c_{13}^2\right)}
=\frac{8c_3e}{\left(c_1^2-c_{13}^2\right)\left(8c_3e+3\dfrac{c_{33}}{c_1}\right)}
\geq0
\end{equation}
and
\begin{equation}
\chi:=\frac{Z_T\left(1+Z_Nc_{11}\right)}{Z_N\left(1+Z_Tc_{44}\right)}=\frac{2\left(c_1^2-c_{13}^2\right)+\frac{16}{3}c_{11}c_1c_3e}{c_{44}c_1c_3\left(c_2+c_3-c_4+\frac{16}{3}e\right)}\geq0\,.
\end{equation}
Scaling factor $k$ grows with increasing concentration of cracks $e$ (assuming that stiffnesses are fixed), since $\partial_ek\geq0$\,.
Also, due to stability conditions, $k$ must be positive.
Since $k$ cannot change its sign, more essential---in the context of the shape of azimuthal variations---is the content of the squared brackets in expression~(\ref{crucial}). 
Given stability conditions, each of the three terms in squared brackets must be positive. 
However, minus signs in front of the second and third terms make it difficult to anticipate the sign of $\Delta R_{pp}^\ell$\,.
For instance, the effect of growing $e$ on $\chi$ is not apparent. 
Function $\chi(e)$ may have extrema; only specific relations among stiffnesses assure that this function is monotonic.
If 
\begin{equation}\label{ineq}
c_{11}c_1c_3\left(c_2+c_3-c_4\right)\geq2\left(c_1^2-c_{13}^2\right)\,
\end{equation}
then $\partial_e\chi\geq0$\,.
Inequality~(\ref{ineq}) is satisfied if $c_{11}\geq c_{33}\wedge c_{11}\geq2c_{44}$\,, which is a reasonable condition, since in VTI media, horizontal velocity is usually greater than the vertical one and, in general, horizontal P/S velocity ratio is greater than $\sqrt{2}\,$.
We have performed Monte Carlo simulations, where a million examples of VTI backgrounds satisfying stability conditions were chosen. The values of stiffnesses were distributed uniformly, and their ranges were selected based on the minimum and maximum values measured by~\citet{Wang}.
In $93.81\%$ of cases, inequality~(\ref{ineq}) was satisfied.
Thus, in a great majority of cases, $\partial_e\chi\geq0$ is true.
The influence of each stiffness on $\chi$ is even more complicated; again, functions may have many extrema, and lengthy inequalities must be satisfied to render them monotonic. 
On the other hand, the influence of stiffnesses on the second term is trivial.
It grows for increasing $c_{13}$ or $c_{66}\,$. 
The third term grows with increasing incidence angle and $c_{11}$\,, but dependence on $c_{66}$ is not obvious.
To remove the ambiguity associated with $c_{66}$ and to get more insight into expression~(\ref{crucial}), we propose to focus on proportions between elasticity parameters, namely, $p_{11}=c_{11}/c_{66}$\,, $p_{33}=c_{33}/c_{66}$\,, $p_{13}=c_{13}/c_{66}$\,, and $p_{44}=c_{44}/c_{66}$\,. 
We obtain
\begin{equation}\label{xx}
\Delta R_{pp}^\ell=k_p\sin^2\theta\,
\left[\chi p_{44}^2-2p_{13}-(p_{11}-1)\tan^2\theta\right]=:k_p\sin^2\theta\,
\left(a-b\tan^2\theta\right)\,,
\end{equation}
where $k_p:=c_{66}^2k$ depends on $p_{11}$\,, $p_{33}$\,, $p_{13}$\,, and $p_{44}$ solely.
Also, $\chi$ can be expressed in terms of $e$ and the aforementioned proportions only.
Having this simplified form, we expect negative $\Delta R^\ell_{pp}$ for smaller $e$ and $p_{44}$\,, significant values of $p_{11}$ and $p_{13}$\,, and larger incidence angle.
On the other hand, we expect positive $\Delta R^\ell_{pp}$ for larger $e$ and $p_{44}$\,, smaller values of $p_{11}$ and $p_{13}$\,, and smaller $\theta$\,.  
An interesting case might happen when $e$ and $\theta$ are neither very small nor very large, then $p_{11}$ and $p_{13}$ may have a deciding influence on the sign of $\Delta R_{pp}^\ell$\,.
In such a situation, if rocks are gas-bearing---where $p_{11}$ and $p_{13}$ should be small---we can expect that the reflection coefficient will be bigger for a wave propagating perpendicular to cracks (positive $\Delta R^\ell_{pp}$). 
If there is no gas, we expect negative $\Delta R_{pp}^\ell$\,.
To sum up, we conjecture that for
 \begin{itemize}
 \item{small $e$ and large $\theta\,$: expect $\Delta R_{pp}^\ell<0$ and $\Delta R_{pp}^u>0$\,,} 
 \item{moderate $e$ and $\theta$\,, and brine-saturated or dry rocks: expect $\Delta R_{pp}^\ell<0$ and $\Delta R_{pp}^u>0$\,,}
 \item{moderate $e$ and $\theta$\,, and gas-saturated rocks: expect $\Delta R_{pp}^\ell>0$ and $\Delta R_{pp}^u<0$\,,}
 \item{large $e$ and small $\theta$\,: expect $\Delta R_{pp}^\ell>0$ and $\Delta R_{pp}^u<0$\,.} 
\end{itemize}

As we have already discussed, the shape of azimuthal variations of a VTI medium with an embedded set of aligned cracks (with a normal parallel to the $x_1$-axis) essentially depends on six factors: $e$\,, $\theta$\,, $p_{11}$\,, $p_{33}$\,, $p_{13}$\,, and $p_{44}$\,.
Due to complicated forms of $Z_N$ and $Z_T$\,, it is hard to grasp each factor's exact contributions to $\Delta R_{pp}^\ell$\,. 
Therefore, we propose to consider a simpler situation of an isotropic background.
In such a case, the number of independent shape factors reduces to three: $e$\,, $\theta$\,, and $p=c_{11}/c_{44}>4/3$\,.
Factor $\chi$ can be written as
\begin{equation}
\chi=\frac{16c_{11}^2e+12c_{11}c_{44}-12c_{44}^2}{16c_{11}c_{44}e+9c_{11}c_{44}-6c_{44}^2}=
\frac{16p^2e+12p-12}{16pe+9p-6}>0\,,
\end{equation}
so we obtain
\begin{align}\label{crucial2}
\Delta R_{pp}^\ell&=\frac{4e}{c_{44}(c_{11}-c_{44})(16e+3)}\sin^2\theta\,
\left[\chi c_{44}^2-2c_{44}(c_{11}-2c_{44})-\tan^2\theta\left(c_{11}c_{44}-c_{44}^2\right)\right] \nonumber \\
&=\frac{4e}{(p-1)(16e+3)}\sin^2\theta\,
\left[
\frac{-p^2(16e+18)+p(64e+60)-36}{p(16e+9)-6}
-(p-1)\tan^2\theta
\right] \nonumber \\
&=:k_p^{\rm{iso}}\sin^2\theta\,\left[a^{\rm{iso}}-b^{\rm{iso}}\tan^2\theta\right]
\,.
\end{align}
Let us discuss the influences of sole $e$ and sole $p$ on scaling factor $k_p^{\rm{iso}}$ and two terms in squared brackets; $a^{\rm{iso}}$ and $b^{\rm{iso}}$\,.
Since $\partial_ek_p^{\rm{iso}}>0$ and $\partial_ea^{\rm{iso}}>0$\,, we know that $k_p^{\rm{iso}}$ and $a^{\rm{iso}}$ increase with growing $e\,$.
On the other hand, $\partial_pk_p^{\rm{iso}}<0$\,, $\partial_pa^{\rm{iso}}<0$\,, and $\partial_pb^{\rm{iso}}>0$\,; thus, $k_p^{\rm{iso}}$ and $a^{\rm{iso}}$ decrease, but $b^{\rm{iso}}$ increase with growing $p$\,.
Now, let us discuss influences of sole $e$ and sole $p$ on $\Delta R_{pp}^\ell$\,.
To do so, we assume a fixed incidence angle.
Due to increasing $e$\,, we expect
$\Delta R_{pp}^\ell$ to grow
(always true if $\Delta R^\ell_{pp}>0$ for $\forall e$).  
In case $p$ increases, we anticipate
$\left|\Delta R_{pp}^\ell\right|$ to diminish
(always true if $\Delta R^\ell_{pp}>0$ for $\forall e$).  
The last variable that contributes to the reflection coefficient is the incidence angle. 
Since $\partial_\theta (b^{\rm{iso}}\tan^2\theta)>0\,$, growing $\theta$ renders $\Delta R_{pp}^\ell$ more likely to be negative.
Considering the above analysis, we expect negative $\Delta R_{pp}^\ell$ for small $e$\,, large $p$\,, and large $\theta$\,. 
On the other hand, we expect positive $\Delta R_{pp}^\ell$ for large $e\,$, small $p\,$, and small $\theta\,$. 
If the concentration of cracks and incidence angle are neither very small nor large, we anticipate a significant role of rock's saturation. 
Therefore, for the isotropic background, we conjecture the same bullet points as for the VTI  surroundings.
They appear to be more convincing for the isotropic case, where fewer unknowns are involved, and $\chi$ is simplified.
\subsection{Irregular CAVA}\label{sec:irr}
In this section, we examine the irregular case of CAVA.
In other words, we check if and when $R_{pp}$  presents extreme values not only for angles normal or parallel to cracks.
Mathematically, the irregularity occurs if and only if
\begin{equation}\label{condd}
\begin{gathered}
\left[
R_{pp}(\theta,\psi_{irr})>R_{pp}(\theta,0^\circ)\land
R_{pp}(\theta,\psi_{irr})>R_{pp}(\theta,90^\circ)
\right]
\\
\lor
\\
\left[
R_{pp}(\theta,\psi_{irr})<R_{pp}(\theta,0^\circ)\land
R_{pp}(\theta,\psi_{irr})<R_{pp}(\theta,90^\circ)
\right]\,,
\end{gathered}
\end{equation}
where $\psi_{irr}\in(0^\circ,90^\circ)$\,.
The above condition is true for either lower or upper halfspace. 
Again, without loss of generality, we assume that cracks are embedded in the lower halfspace and consider differences between reflection coefficients.
We define
\begin{equation}\label{irr}
\Delta R_{pp\psi}:=R_{pp}(\theta,0^\circ)-R_{pp}(\theta,\psi_{irr})\,
\end{equation}
so that condition~(\ref{condd}) can be simply formulated as
\begin{equation}\label{condd2}
\left[
\Delta R_{pp\psi}^\ell<0
\land
\Delta R_{pp\psi}^\ell<\Delta R_{pp}^\ell
\right]
\lor
\left[
\Delta R_{pp\psi}^\ell>0
\land
\Delta R_{pp\psi}^\ell>\Delta R_{pp}^\ell
\right]\,.
\end{equation}
Hence, to examine the irregularity, first we need to derive $\Delta R^\ell_{pp\psi}$\,.
Upon algebraic operations, expression~(\ref{irr}) can be written as
\begin{equation}\label{irr2}
\begin{aligned}
\Delta R_{pp\psi}^\ell&=\frac{1}{2C_{33}^\ell}\sin^2\theta\sin^2\psi
\bigg\{2C_{44}^\ell-2C_{55}^\ell+C_{13}^\ell-C_{23}^\ell\\
&+\frac{1}{2}\tan^2\theta\left[2C_{11}^\ell-2C_{12}^\ell-4C_{66}^\ell+\sin^2\psi\left(2C_{12}^\ell+4C_{66}^\ell-C_{11}^\ell-C_{22}^\ell\right)\right]\bigg\}\,,
\end{aligned}
\end{equation}
where $C_{ij}^\ell$ are the effective stiffnesses of a lower halfspace. 
It has a similar form to expression~(\ref{DRpp}).
Analogously to the previous section, we express $\Delta R_{pp\psi}^\ell$ in terms of proportions between background elasticities, $p_{ij}$.
We obtain
\begin{align}\label{irxx}
\Delta R_{pp\psi}^\ell&=k_p\sin^2\theta\sin^2\psi
\bigg\{\chi p_{44}^2-2p_{13}-\tan^2\theta \nonumber \\
&\qquad \left[\frac{p_{11}Z_N-Z_T-\sin^2\psi(Z_N-Z_T+Z_NZ_Tc_{66}(1-p_{11}))}{Z_N(1+c_{66}Z_T)}\right]\bigg\}
\nonumber \\
&=:k_p\sin^2\theta\sin^2\psi
\left(a-\beta\tan^2\theta\right)\,.
\end{align}
If we express $Z_N$ and $Z_T$ in terms of proportions $p_{11}$\,, $p_{33}$\,, $p_{13}$\,, and $p_{44}$\,, then $c_{66}$ do cancel.
Expression~(\ref{irxx}) is analogous to expression~(\ref{xx}); similarly to $\Delta R_{pp}^\ell$\,, coefficient $\Delta R_{pp\psi}^\ell$ can be either negative or positive. 
The relation between $\Delta R_{pp\psi}^\ell$ and $\Delta R_{pp}^\ell$ is not obvious.
In general, $\sin^2\psi<1$ in front of the curly brackets decreases $\left|\Delta R_{pp\psi}^\ell\right|$\,; this trigonometric function in the expression of $\left|\Delta R_{pp}^\ell\right|$ equals one.
On the other hand, the contribution of $\beta$ inside $\left|\Delta R_{pp\psi}^\ell\right|$ can be either smaller or larger as compared to analogous contribution of $b$ inside $\left|\Delta R_{pp}^\ell\right|$\,; note that
\begin{equation}\label{bminb}
\beta-b=\frac{(Z_N-Z_T)\left(1-\sin^2\psi\right)-Z_NZ_Tc_{66}\left(1-\sin^2\psi\right)\left(p_{11}-1\right)}{Z_N\left(1+c_{66}Z_T\right)}
\end{equation}
can be either positive or negative that is governed by the azimuth, crack concentration, and proportions $p_{ij}$\,.
Hence, both irregular CAVA, namely, $\Delta R_{pp\psi}^\ell<0\land\Delta R_{pp\psi}^\ell<\Delta R_{pp}^\ell$ or $\Delta R_{pp\psi}^\ell>0\land \Delta R_{pp\psi}^\ell>\Delta R_{pp}^\ell$ seem to be possible.
However, we can show the case where irregularity is impossible for all azimuths by introducing certain assumptions.
Assume that $Z_N\leq Z_T$\,, $p_{11}>1$\,, and consider $\Delta R_{pp\psi}^\ell<0$\,.
Then, relation $\Delta R_{pp\psi}^\ell-\Delta R_{pp}^\ell$\,, namely,
\begin{equation}\label{case1}
k_p\sin^2\theta\left[\sin^2\psi\left(a-\beta\tan^2\theta\right)-\left(a-b\tan^2\theta\right)\right]
=:
k_p\sin^2\theta\left(x\sin^2\psi-y\right)\,
\end{equation}
must be positive, since we easily obtain $\beta<b$ that leads to $x>y$\,, where $y<0$\,.
Following~\citet{GrechkaKachanov}, $Z_N>Z_T$ is true for isotropic rocks having negative Poisson's ratio, which is a rare case. 
Also, $p_{11}>1$ is a typical situation corresponding to horizontal P-wave faster than S-wave.
Hence, we can state that the irregularity is unlikely to occur for rocks with regular Poisson's ratio, where $\Delta R_{pp\psi}^\ell<0$ (or $\Delta R_{pp\psi}^u>0$)\,.

Let us perform analogous analysis, assuming isotropic, not VTI, background.
We obtain
\begin{equation}\label{relations}
\Delta R_{pp\psi}^\ell=k_p^{\rm{iso}}\sin^2\theta\sin^2\psi\left( a^{\rm{iso}}-\beta^{\rm{iso}}\tan^2\theta\right)\,,
\end{equation}
where 
\begin{equation}
\beta^{\rm{iso}}=\frac{12p\left(\frac{3}{4}p-1\right)+(2-p)\left(6-3\sin^2\psi\right)+16pe(p-1)\sin^2\psi}{16pe+9p-6}>0\,.
\end{equation}
For isotropy, $Z_N=Z_T$ is tantamount to $p=2$ and zero Poisson's ratio.
Again, we try to compare $\beta^{\rm{iso}}$ with $b^{\rm{iso}}$. 
We notice that $\partial_p(\beta^{\rm{iso}}-b^{\rm{iso}})<0$\,.
Also, if $p=2$, then, $\beta^{\rm{iso}}<b^{\rm{iso}}$. 
It means that $Z_N\leq Z_T$ is tantamount to $\beta^{\rm{iso}}<b^{\rm{iso}}$\,. 
On the other hand, if $Z_N>Z_T$, then $\beta^{\rm{iso}}$ can be either larger or smaller than $b^{\rm{iso}}$\,.
Assume that $Z_N\leq Z_T$ and consider $\Delta R_{pp\psi}^\ell<0$\,.
Then, relation $\Delta R_{pp\psi}^\ell-\Delta R_{pp}^\ell$\,, namely,
\begin{equation}\label{case2}
\begin{aligned}
k_p^{\rm{iso}}\sin^2\theta\left[\sin^2\psi\left(a^{\rm{iso}}-\beta^{\rm{iso}}\tan^2\theta\right)-\left(a^{\rm{iso}}-b^{\rm{iso}}\tan^2\theta\right)\right] \hphantom{xxxxxxxxxxxxxxx}\\
\qquad=k_p^{\rm{iso}}\sin^2\theta\left(x^{\rm{iso}}\sin^2\psi-y^{\rm{iso}}\right)\,
\end{aligned}
\end{equation}
must be positive, since $\beta^{\rm{iso}}<b^{\rm{iso}}$ that leads to $x^{\rm{iso}}>y^{\rm{iso}}\,$, where $y^{\rm{iso}}<0\,$.
In other words, the irregularity cannot occur for rocks with positive Poisson's ratio, where $\Delta R_{pp\psi}^\ell<0$ (or $\Delta R_{pp\psi}^u>0$)\,.
We illustrate the analysis from this section in Figure~\ref{fig:irr}, where some examples of possible irregular CAVA are discussed.

\begin{figure}[!htbp]
\includegraphics[width=0.9\textwidth]{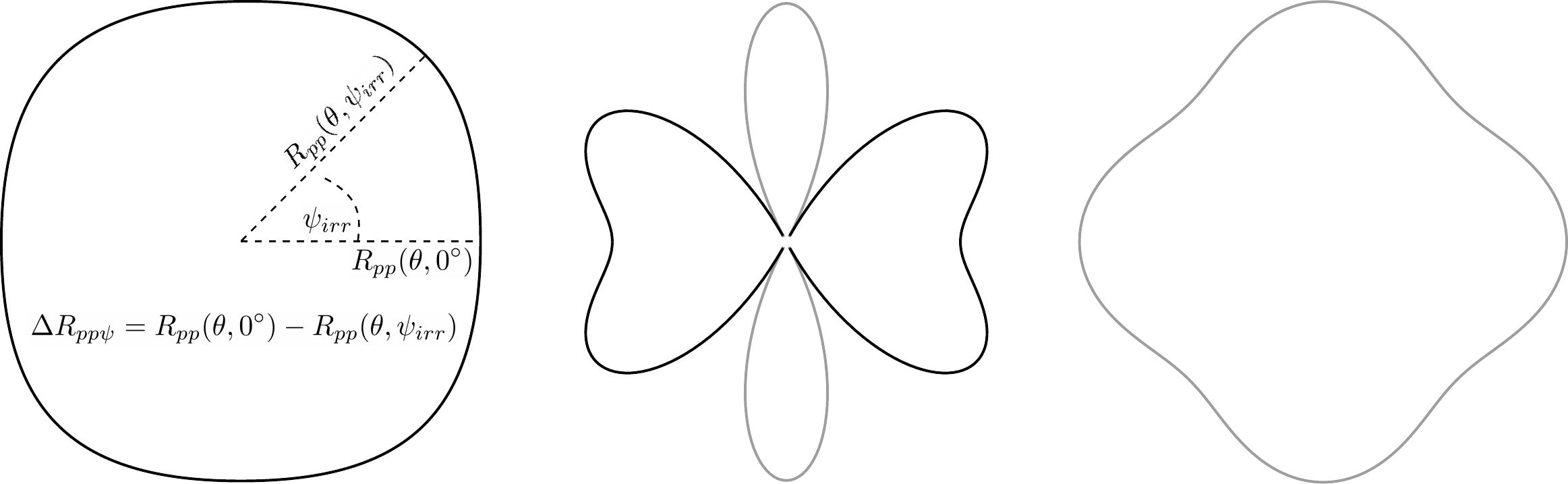}
\caption[\small{Three examples of irregular azimuthal variations}]{\small{Three examples of irregular CAVA for fixed incidence are shown. Let us choose black and grey colours to represent positive and negative reflection coefficient, respectively. Despite remarkably different shapes, the irregularity of each graph corresponds to $\Delta R_{pp\psi}<0$\,.
Based on the analysis from Section~\ref{sec:irr}, the above graphs are very unlikely to occur if cracks are embedded in a lower halfspace, since $\Delta R^\ell_{pp\psi}<0$ occurs for very low Poisson's ratio only (rarely presented in seismology). Note that if we switch colours, the irregularity corresponds to $\Delta R_{pp\psi}>0$\,. In such a case, the shapes are unlikely to occur if cracks are embedded in an upper---instead of lower---halfspace, since $\Delta R^u_{pp\psi}>0$ happens for very low Poisson's ratio only.}}
\label{fig:irr}
\end{figure} 
\subsection{CAVA reversing process}
Previously, we have examined what parameters decide whether the reflection coefficient is larger for azimuths parallel or perpendicular to cracks (bullet points in Section~\ref{sec:reg}).
Also, we have discussed that extreme values of $R_{pp}$ can correspond to other than the aforementioned directions.
In such a case, we experience irregularity.
Irregular CAVA is less or more likely to occur, depending on the sign of $\Delta R_{pp\psi}$ and $Z_T-Z_N$ (see Section~\ref{sec:irr}).  
In turn, $\Delta R_{pp\psi}$ depends on the concentration of cracks, incidence angle, and stiffnesses.
Due to complicated form of expression~(\ref{case1}) or~(\ref{case2}), it is hard to grasp what proportions of stiffnesses, or magnitudes of $e$ and $\theta$\,, lead to irregularity.
In other words, we are not able to propose the analogous bullet points as it was done in Section~\ref{sec:reg}.
However, there is a specific case when azimuthal variations are extremely likely to be irregular.
This situation occurs during ``CAVA reversing process'' that we discuss below.

Consider again expression~(\ref{xx}) and assume certain values of stiffnesses, $e$\,, and $\theta$ for which $\Delta R_{pp}^\ell<0$ (equivalently $\Delta R_{pp}^u>0$)\,.
In such a case, $R_{pp}$ is larger in parallel than in a perpendicular direction to the cracks.
CAVA may be either regular or irregular.
Based on the analysis from Section~\ref{sec:reg}, growing $e$ increases $\Delta R_{pp}^\ell$\,.
Hence, if we continuously increase $e$ (while the other parameters are fixed), we reach $\Delta R_{pp}^\ell=0$ defined as a reversing point, and subsequently, $\Delta R_{pp}^\ell>0$\,.
The above process expresses CAVA reversing, illustrated in Figure~\ref{fig:rev}. 
\begin{figure}[!htbp]
\includegraphics[width=0.8\textwidth]{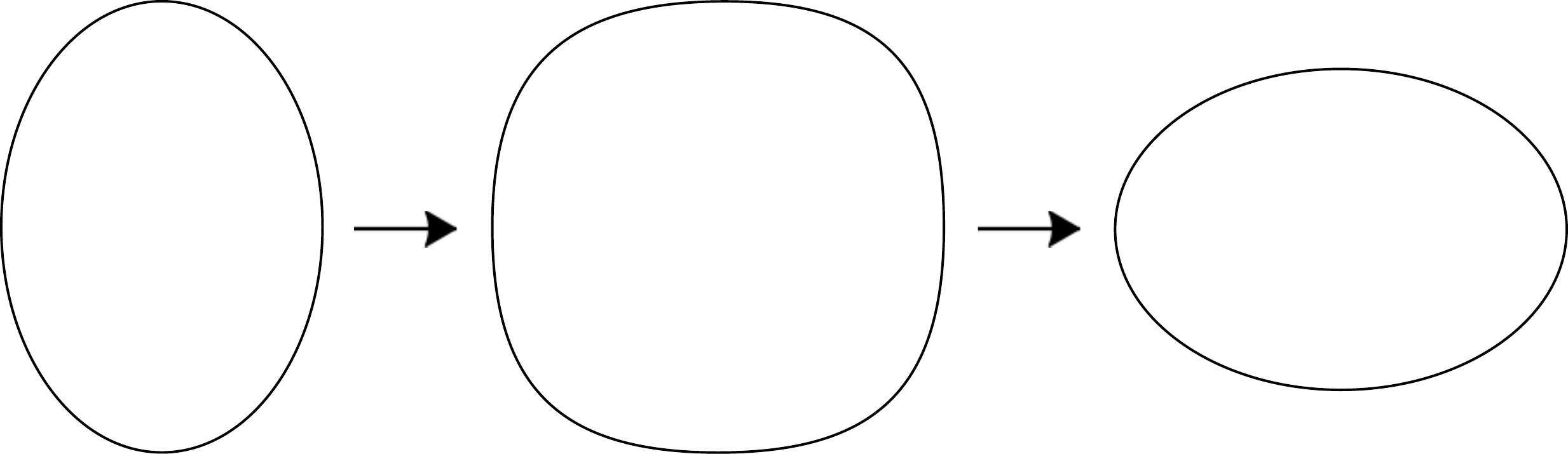}
\caption[\small{Azimuthal reversing process for growing crack concentration}]{\small{Example of CAVA reversing process for continuously growing concentration of cracks (indicated by arrows). Middle graph illustrates the reversing point, where $R_{pp}(\theta,0^\circ)=R_{pp}(\theta,90^\circ)$ equivalent to $\Delta R_{pp}=0$\,. The reversing process is possible if the first graph presents either $\Delta R_{pp}^\ell<0$ or $\Delta R_{pp}^u>0$\,. Hence, the above example illustrates either positive $R_{pp}$ calculated for a model with cracks embedded in a lower halfspace or negative $R_{pp}$ corresponding to cracks embedded in an upper halfspace. }}
\label{fig:rev}
\end{figure} 
The reversing process cannot occur if $\Delta R_{pp}^\ell>0$\,.
From the seismological perspective, CAVA at the reversing point is very likely to be irregular, as shown in the following Lemma.
\begin{lemma}
The polar graph of the PP reflection coefficient at CAVA reversing point---where $\Delta R^\ell_{pp}=0$\,, $e>0$\,, and $\theta>0$---is either irregular or a regular circle. 
\end{lemma}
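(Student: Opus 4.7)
The plan is to prove the lemma by a dichotomy argument based on continuity and the extreme value theorem, leveraging the two-fold symmetry already established in Lemma~1. The advantage of this route is that it avoids direct algebraic manipulation of the unwieldy effective stiffnesses in matrix~(\ref{crack:end}) or the expression for $\Delta R_{pp\psi}^\ell$ in~(\ref{irxx}); essentially all the algebraic work has already been done in packaging the reversing condition as $\Delta R_{pp}^\ell=0$.

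First I would note that by inspection of expression~(\ref{vav}), the map $\psi\mapsto R_{pp}(\theta,\psi)$ is a trigonometric polynomial and hence continuous on the compact interval $[0^\circ,90^\circ]$. The reversing-point hypothesis $\Delta R_{pp}^\ell=0$ then reads $R_{pp}(\theta,0^\circ)=R_{pp}(\theta,90^\circ)=:R_0$. I would then split into two cases. Case~1: if $R_{pp}(\theta,\cdot)\equiv R_0$ on $[0^\circ,90^\circ]$, Lemma~1 extends the constancy to the full azimuthal range, and the polar plot is the circle of radius $|R_0|$ centred at the origin---the ``regular circle'' alternative. Case~2: if $R_{pp}(\theta,\cdot)$ is non-constant on $[0^\circ,90^\circ]$, the extreme value theorem produces an attained maximum $M$ and minimum $m$ with $m<M$. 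If $M>R_0$, then $M$ cannot be realised at either endpoint, so there is an interior $\psi_{irr}\in(0^\circ,90^\circ)$ with $R_{pp}(\theta,\psi_{irr})>R_0=R_{pp}(\theta,0^\circ)=R_{pp}(\theta,90^\circ)$, which is the first disjunct of the irregularity condition~(\ref{condd}). Otherwise $M=R_0$, which forces $m<R_0$, and the analogous endpoint argument puts a minimising interior $\psi_{irr}$ into the second disjunct of~(\ref{condd}). In either sub-case the graph is irregular.

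The only genuinely delicate point---and the one I would be careful to spell out in a full write-up---is excluding the degenerate scenario in which an extremum is attained simultaneously at an endpoint and nowhere in the interior: this is precisely where the endpoint equality $R_{pp}(\theta,0^\circ)=R_{pp}(\theta,90^\circ)$ supplied by the reversing condition does the work, guaranteeing strict interior separation whenever $R_{pp}$ is not identically $R_0$. The hypotheses $e>0$ and $\theta>0$ are used only implicitly, in that they are what make the condition $\Delta R_{pp}^\ell=0$ a genuine reversing constraint rather than a trivial identity; they play no role in the continuity argument itself.
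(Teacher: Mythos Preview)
Your argument is correct. The paper's own proof is more terse and proceeds by direct substitution rather than via the extreme value theorem: plugging the reversing-point condition $\Delta R_{pp}^\ell=0$ into the reformulated irregularity criterion~(\ref{condd2}) collapses the latter to ``$\Delta R_{pp\psi}^\ell\neq 0$ for some $\psi_{irr}\in(0^\circ,90^\circ)$'', whose negation is exactly that $R_{pp}(\theta,\cdot)$ is constant on the interior and hence a circle. Your appeal to continuity and the extreme value theorem is sound but stronger than needed---once $R_{pp}(\theta,0^\circ)=R_{pp}(\theta,90^\circ)=R_0$, \emph{any} single interior azimuth with value different from $R_0$ already satisfies one disjunct of~(\ref{condd}), so locating the global extremum is unnecessary. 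The underlying observation in both proofs is the same: at the reversing point the two endpoint values coincide, which makes ``regular but non-circular'' impossible.
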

\begin{proof}
Assume that CAVA is not irregular and consider a VTI background.
The graph is not irregular at reversing point if and only if $\Delta R_{pp\psi}^\ell=0$ for $\forall\psi_{irr}\in(0^\circ,90^\circ)$\,; in such a case, it represents a regular circle.
\end{proof}
We can show that a circle at the reversing point is unlikely in the seismological context. 
To have a regular shape, we require that
\begin{equation}\label{lem}
k_p\sin^2\theta\left(x\sin^2\psi-y\right)=0\,.
\end{equation}
To satisfy stability conditions, constant $k_p$ must be greater than zero (for $e>0$)\,.
Expression~(\ref{lem}) is true if $x\sin^2\psi=y$\,.
Since $x$ and $y$ for given $\theta$ are constants, CAVA is regular if and only if $x=y=0$\,, which---given definition of these constants from expression~(\ref{case1})---is true for $\beta=b$\,. In turn, considering  expression~(\ref{bminb}), we see that $\beta$ equals $b$ if and only if 
\begin{equation}
Z_N-Z_T-Z_NZ_Tc_{66}(p_{11}-1)=0
\end{equation}
that is extremely unlikely in the seismological context---since usually $p_{11}>1$ and $Z_T>Z_N$---but physically possible. Analogical analysis can be performed for an isotropic case. 
\subsection{Magnitude of CAVA}
So far, we have thoroughly discussed the shape of azimuthal variations caused by cracks.
However, we have not investigated the influence of $e$ on the magnitude of the reflection coefficient.
Due to the complexity of Vavry\v{c}uk-P\v{s}en\v{c}\'{i}k approximation, analytical analysis is challenging to perform.
Even if we assume that cracks are embedded in one of the halfspaces (as we did in previous sections), we cannot get rid of $\Delta$ in expression~(\ref{vav}), and stiffnesses of two halfspaces have to be considered.
Therefore, we propose numerical instead of analytical analysis. 
We assume that both halfspaces have a VTI background, whereas cracks are embedded in the lower one. 
We perform three MC simulations to obtain $R^\ell_{pp}(\theta,\psi,e)$ for different incidence angles; $\theta=15^\circ$\,, $\theta=30^\circ\,$, and $\theta=45^\circ$\,.
Let us discuss a single simulation.
MC chooses one--thousand elasticity tensors for each halfspace (again stiffnesses are distributed uniformly and their range is taken from~\citet{Wang}).
Then, $R^\ell_{pp}(\psi,e)$ is calculated for azimuths $\psi=0^\circ$\,, $\psi=45^\circ$\,, and $\psi=90^\circ$\,.
Finally, to understand the influence of cracks on the magnitude of the reflection coefficient, we compute a derivative of $R^\ell_{pp}(e)$ with respect to $e$\,.
We check in what percentage of cases $R^\ell_{pp}(e)$ is a monotonic function; continuously decreases/increases for growing $e$\,.
In other words, we focus on $\partial_eR^\ell_{pp}<0$ or $\partial_eR^\ell_{pp}>0$ for all $e\in[0,1]$\,.
We present our findings in Table~\ref{tab:mag}.
\begin{table}[!htbp]
\begin{tabular}
{ccccc}
\toprule
 & $\psi=0^\circ$ &  $\psi=45^\circ$ &   $\psi=90^\circ$ &  all combined  \\[+0.05cm]
 \cmidrule{1-5}
 sim. $I$ ($\theta=15^\circ$) & 73.9\,/\,7.3 & 84.3\,/\,4.1&91.4\,/\,2.3 & 73.9\,/\,2.3 \\
  sim. $II$ ($\theta=30^\circ$) & 69.0\,/\,6.1 & 80.5\,/\,4.0& 83.5\,/\,3.4& 67.7\,/\,3.1 \\
   sim. $III$ ($\theta=45^\circ$) & 76.7\,/\,3.0 & 84.5\,/\,1.9&82.4\,/\,2.1 & 72.4\,/\,1.3 \\
  \bottomrule
\end{tabular}
\caption[\small{Decreasing/increasing reflection coefficient vs. growing crack concentration}]{\small{Numbers refer to the percentages of cases, where reflection coefficient continuously decreases/increases for growing crack concentration $e\in[0,1]$\,.
To obtain the results, $\partial_eR^\ell_{pp}$ are computed for one-thousand examples of interfaces drawn three times in MC simulations (sim. $I$, sim. $II$, and sim. $III$).
In each simulation, a different incidence angle is chosen.
Percentages are presented for particular azimuthal angles (columns 2--4).
The last column indicates a decrease/increase of $R^\ell_{pp}$ that must occur for all azimuths combined, namely, $\psi=0^\circ$\,, $\psi=45^\circ$\,, and $\psi=90^\circ$\,.
VTI backgrounds with cracks (with a normal parallel to the $x_1$-axis) embedded in the lower halfspace are assumed.  
}}
\label{tab:mag}
\end{table}
Based on the results from the last column, we notice that for most of the chosen models, $R^\ell_{pp}(e)$ decreases with the growing concentration of cracks.
Such a decrease is more probable for $\theta=15^\circ$\,; however, in this context, the incidence angle's influence is not significant.
In general, an increase of $R^\ell_{pp}(e)$ seems to be very unlikely. 
Columns two to four---apart from giving us insight into magnitudes---provide us with interesting information on CAVA shape. 
We notice that the reflection coefficient is more likely to decrease in the direction parallel to cracks than the perpendicular one.
Also, in some cases we can expect irregularity since there exist examples, where $R^\ell_{pp}(e)$ decreases for $\psi=45^\circ$\,, but not for $\psi=0^\circ$ and $\psi=90^\circ$\,.
Such irregularity is more likely to occur for large incidence angles.

Note that if cracks are embedded in the upper halfspace, the results are exactly the opposite. CAVA is likely to increase but unlikely to decrease. 
To sum up, in at least two--third scenarios, growing $e$ leads to a continuous decrease/increase of $R_{pp}(e)$\,, where the lower/upper medium is cracked, respectively. 
Similar results can be obtained for isotropic, instead of VTI, backgrounds.
\subsection{CAVA patterns}
In this section, we fix an incidence angle and propose patterns that consist of a series of two-dimensional polar graphs illustrating how CAVA may change with increasing crack concentration. 
To do so, we make use of the analysis of azimuthal shapes and magnitude, performed in the previous sections.
Let us enumerate essential findings and conjectures. We assume either VTI or isotropic backgrounds and cracks embedded in the lower halfspace (conclusions are the opposite if cracks are situated in the upper halfspace). 
With growing crack concentration:
\begin{enumerate}
\item{if $\theta$ is large, negative $\Delta R_{pp}$ becomes positive (reversing process),}
\item{if $\theta$ is small (or moderate, but rocks are saturated by gas), positive $\Delta R_{pp}$ remains positive,}
\item{irregularity may occur if $\Delta R_{pp\psi}>0$ (unless Poisson's ratio is very low),}
\item{irregularity usually occurs during reversing process (when negative $\Delta R_{pp}$ becomes positive),} 
\item{reflection coefficient usually decreases.}
\end{enumerate}
Based on the above findings, we propose patterns illustrated in Figures~\ref{fig:pattern1} and~\ref{fig:pattern2}.
Let us discuss Figure~\ref{fig:pattern1}.
The patterns shown therein, correspond to either $R_{pp}<0$ for $e=0$ and cracks in the lower halfspace, or $R_{pp}>0$ for $e=0$ and cracks in the upper halfspace.
The patterns are short since CAVA does not change the sign; negative $R_{pp}^\ell$ continuously decreases, whereas positive $R_{pp}^u$ continuously increases. 
Figure~\ref{fig:pattern1a} \,describes the situation, where the incidence angle is small (or moderate, but rocks are gas-bearing).
On the other hand, Figure~\ref{fig:pattern1b} refers to the case of a large $\theta$\,.
More extended patterns are shown in Figure~\ref{fig:pattern2}, where CAVA changes the sign.
Thus, they describe either $R_{pp}>0$ for $e=0$ and cracks in the lower halfspace, or $R_{pp}<0$ for $e=0$ and cracks in the upper halfspace. 
Figures~\ref{fig:pattern2a} and~\ref{fig:pattern2b} refer to small and large incidence, respectively.
\begin{figure}[!htbp]
\begin{subfigure}{.99\textwidth}
  \centering
  \includegraphics[scale=0.35]{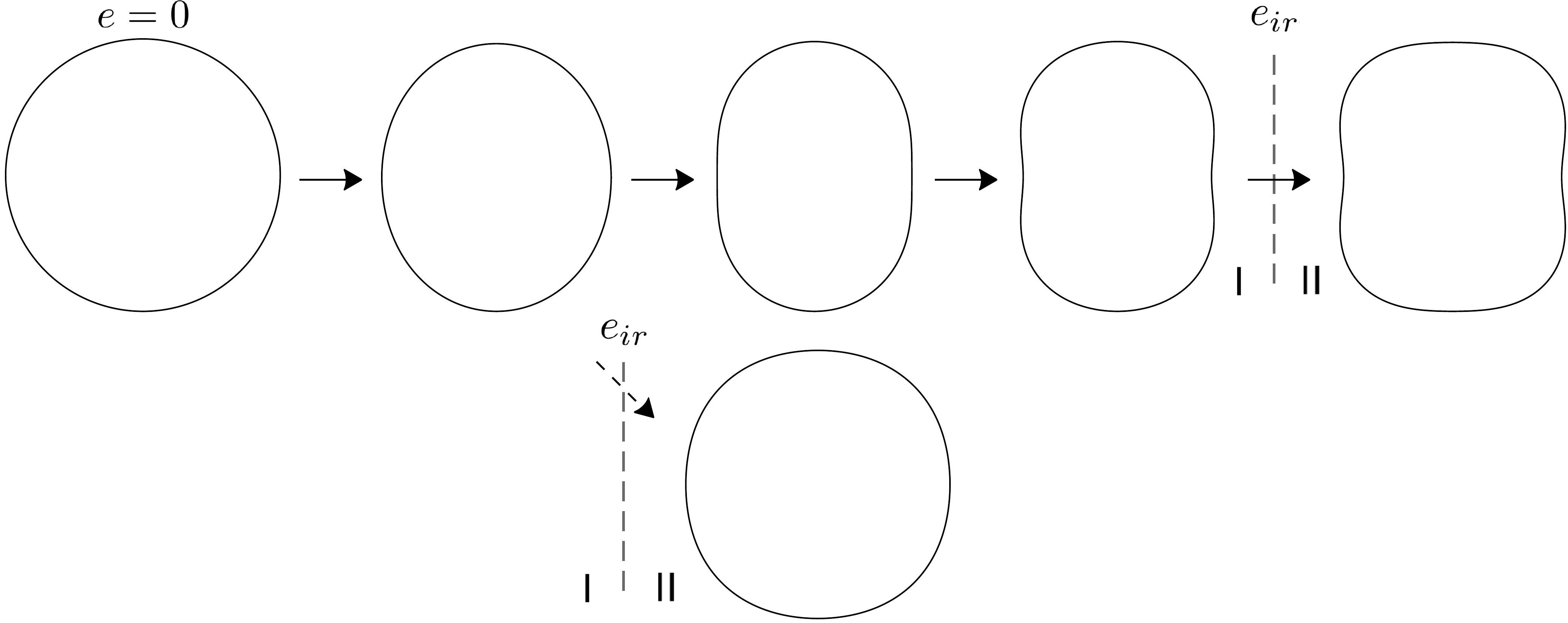}
  \caption{\footnotesize{smaller $\theta$ (gas saturation more likely)}}
  \label{fig:pattern1a}
\end{subfigure}%
\qquad
\begin{subfigure}{.99\textwidth}
\centering
   \includegraphics[scale=0.35]{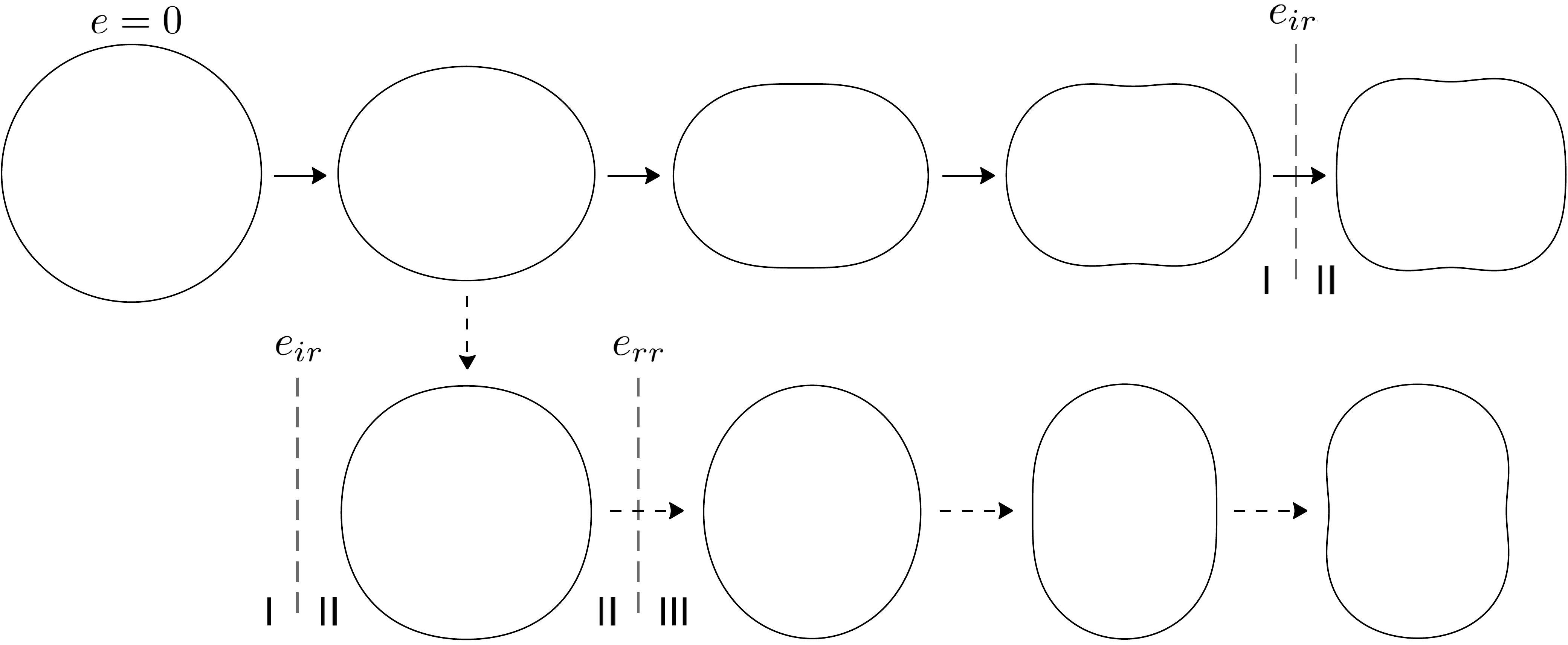}
  \caption{\footnotesize{larger $\theta$ (brine saturation more likely)}}
    \label{fig:pattern1b}
\end{subfigure}%
\caption[\small{Azimuthal variations vs. increasing crack concentration: Short patterns}]{\footnotesize{
Short CAVA patterns illustrate changes of azimuthal variations of reflection coefficient with increasing concentration of cracks, $e$ (indicated by arrows).
Changes in shape---not in magnitude---are reflected only.
Either VTI or isotropic backgrounds are assumed. Cracks are embedded either in a lower ($\ell$) or upper ($u$) halfspace and have a normal parallel to the $x_1$-axis. 
One pattern corresponds to a small incidence angle (or moderate angle and gas-bearing rocks), whereas the second pattern refers to large $\theta$ (or moderate angle and brine-bearing rocks).
Both schemes illustrate either decreasing, negative $R_{pp}^\ell$ or increasing, positive $R_{pp}^u$.
Phases $I$, $II$, and $III$ consist of regular, irregular, and reversed CAVA, respectively.
Boundary between phases $I$ and $II$ is denoted by concentration $e_{ir}$\,. Boundary between phases $II$ and $III$ is described by $e_{rr}$\,.
Dashed arrows indicate some possible ``shortened'' patterns, where the irregular phase occurs at an earlier stage.
Additionally, other shortened patterns---where latter stages are absent (due to, for instance, lack of irregular and reversed phases)---are possible, but are not shown herein.
 }}
\label{fig:pattern1}
\end{figure}
\begin{figure}[!htbp]
\centering
\begin{subfigure}{.99\textwidth}
  \centering
  \includegraphics[width=0.89\textwidth]{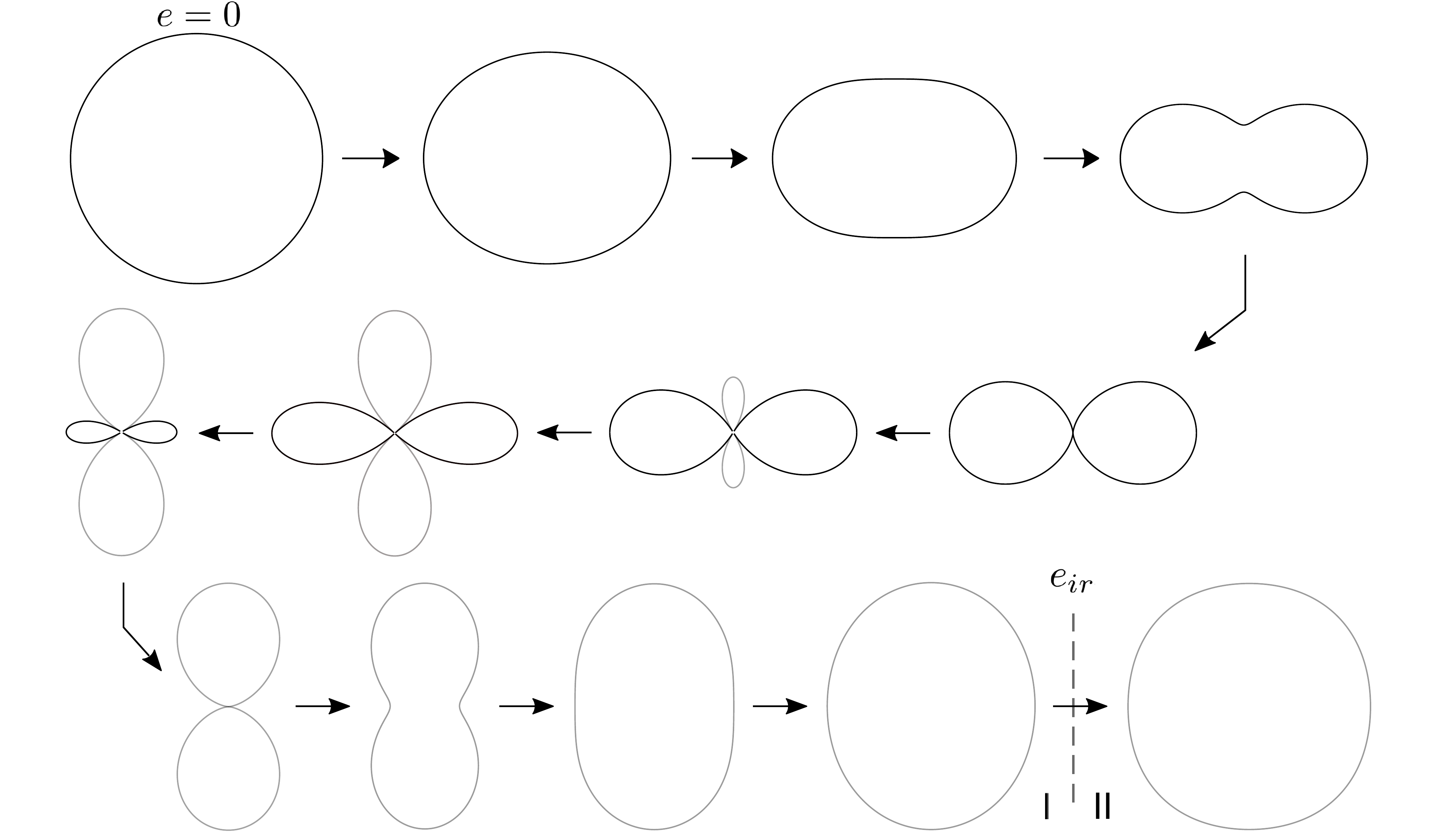}
  \caption{\footnotesize{smaller $\theta$ (gas saturation more likely)}}
  \label{fig:pattern2a}
\end{subfigure}%
\qquad
\begin{subfigure}{.99\textwidth}
  \centering
   \includegraphics[width=0.89\textwidth]{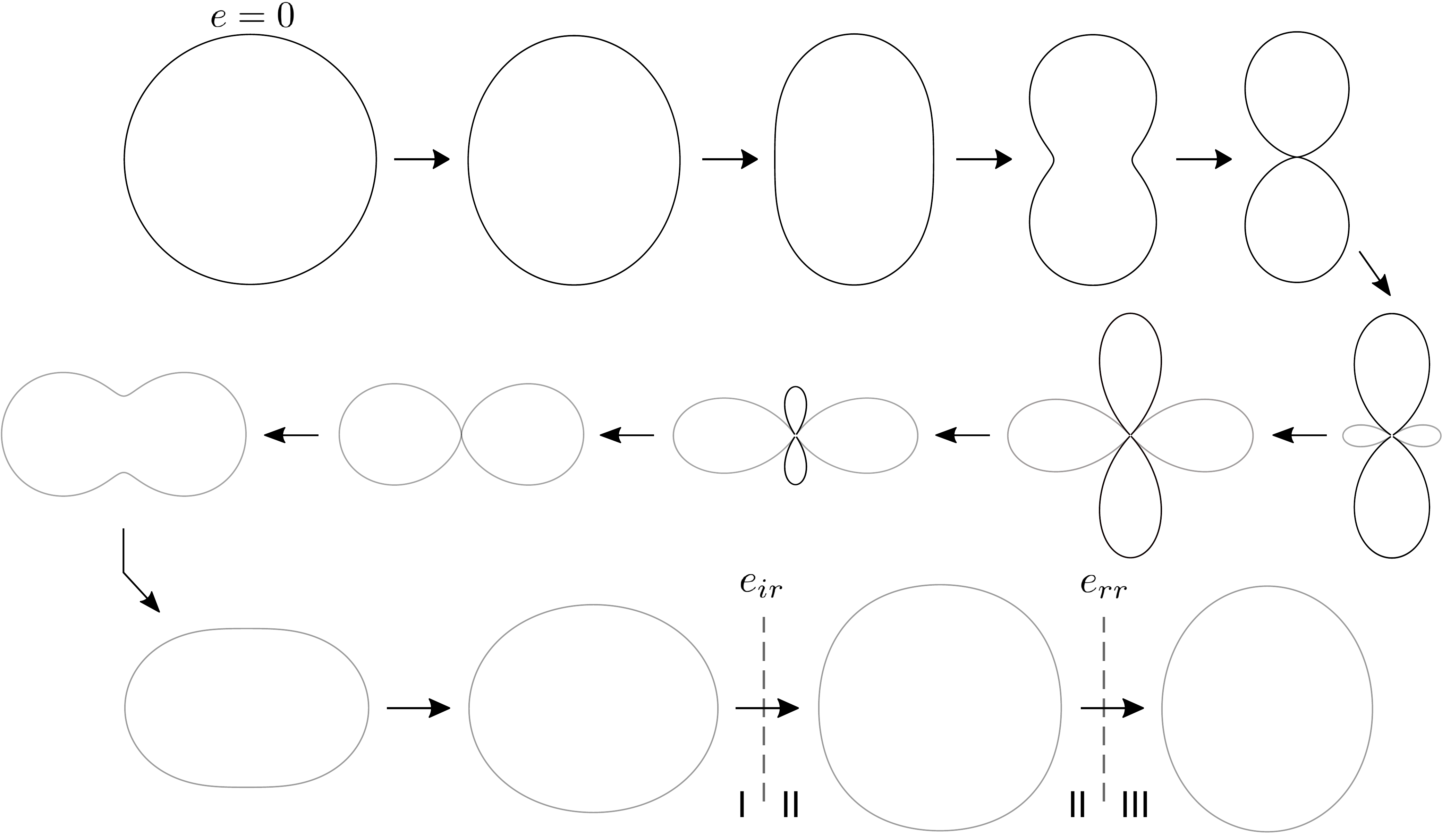}
  \caption{\footnotesize{larger $\theta$ (brine saturation more likely)}}
    \label{fig:pattern2b}
\end{subfigure}%
\caption[\small{Azimuthal variations vs. increasing crack concentration: Long patterns }]{\footnotesize{Long CAVA patterns illustrate changes of azimuthal variations of reflection coefficient with increasing concentration of cracks, $e$ (indicated by arrows).
Changes in shape and sign (different colours)---not in magnitude---are reflected only.
Either VTI or isotropic backgrounds are assumed. Cracks are embedded either in a lower ($\ell$) or upper ($u$) halfspace and have normal towards the $x_1$-axis. 
One pattern corresponds to a small incidence angle (or moderate angle and gas-bearing rocks), whereas the second pattern refers to large $\theta$ (or moderate angle and brine-bearing rocks).
Both schemes illustrate either decreasing, initially positive $R_{pp}^\ell$ or increasing, initially negative $R_{pp}^u$\,.
Phases I, II, and III consist of regular, irregular, and reversed CAVA, respectively.
Boundary between phases $I$ and $II$ is denoted by concentration $e_{ir}$\,. Boundary between phases $II$ and $III$ is described by $e_{rr}$\,.
The absence of latter stages and/or the presence of an early irregularity is possible; it leads to shortened patterns not shown herein.  }}
\label{fig:pattern2}
\end{figure}

Our CAVA patterns consist of regular, irregular, and---if a change of $\Delta R_{pp}$ sign is possible---reversed phase.
However, we have introduced some simplifications to the patterns.
First, the irregularity for some $e_{ir}$ (and reversed shape for some $e_{rr}$) may occur at earlier stage, as indicated by two distinct pattern's branches in both Figures~\ref{fig:pattern1a} and~\ref{fig:pattern1b}.
Such a possibility has not been shown in Figure~\ref{fig:pattern2}.
Therein, the irregular and reverse phases occur at the latest possible stage.
Second, not in every seismological case, the irregular or reversed azimuthal variations must occur.
In other words, the pattern may end earlier and not be full.\linebreak
The aforementioned two circumstances lead to ``shortened CAVA patterns'' linked to Figures~\ref{fig:pattern1} and~\ref{fig:pattern2}, but not shown explicitly there.
Hence, full patterns from Figures~\ref{fig:pattern1} and~\ref{fig:pattern2} should be treated as general, idealised schemes.
Note that due to the change of sign, regular CAVA gains specific shapes.
For instance, in Figures~\ref{fig:pattern2a} and~\ref{fig:pattern2b}, the third graph (counted from the upper-left corner) illustrates the last, limiting, only-convex shape (potato-like).
Then, due to concave parts, CAVA becomes peanut-like.
With growing $e\,$, it reaches infinity-like, knot-like, and shamrock-like shapes, respectively.
Exceptionally, the information on the aforementioned shapes is not induced by the analytical analysis performed in the previous sections. 
We have observed the shapes upon numerous simulations, discussed in Section~\ref{sec:four}.
Also, the irregularity may have different than oval shape, as exemplified in Figure~\ref{fig:irr}.

Let us discuss CAVA patterns in the context of inverse problems.
We notice an important but, in a way, disappointing issue.
The same CAVA (shapes and signs) are present in more than one pattern.
Hence, even if we know $R_{pp}$\,, cracks orientation, and the incidence angle, we cannot infer the sign of the reflection coefficient of the background medium (where $e=0$)\,.
Also, it is hard to grasp what the magnitude of the crack concentration is, or in which halfspace the cracks are situated.
Finally, perhaps most importantly, the difficulty in choosing the right pattern affects the correct inference on rock's saturation. 
To understand it better, consider the following example.
Assume that our CAVA is negative and regular with $\Delta R_{pp}>0$ (narrow and tall ellipse-like shape); the incidence is moderate, and cracks are embedded in the lower halfspace having a normal parallel to the $x_1$-axis.
If our shape belonged to pattern from Figure~\ref{fig:pattern1a} or~\ref{fig:pattern2a}, gas-saturation would be very probable.
On the other hand, if it belonged to pattern from Figure~\ref{fig:pattern1b} or~\ref{fig:pattern2b}, brine-saturation or no-saturation would be more probable.
Unfortunately, our CAVA matches all the aforementioned figures, so the inference on rock's saturation is difficult.
Nonetheless, there are examples where such inference is simple.
Again consider the same location of cracks and moderate incidence. 
Assume positive, regular CAVA with $\Delta R_{pp}<0$ (wide and short ellipse-like shape). Such CAVA matches Figure~\ref{fig:pattern2a} only.
Hence, in this example, gas-saturation is very probable. 
Therefore, we expect that the conjectured patterns may be useful in gas exploration despite the aforementioned difficulties.
\section{Numerical verification}\label{sec:four}
%
In this section, we use numerical techniques to verify and enrich our previous analysis, which led to the conjectured CAVA patterns.
First, we pose the following questions to better understand the nature of azimuthal variations of amplitude for cracked media. 
Are the conjectured patterns correct? How often can we expect the shortened patterns? Do patterns from Figures~\ref{fig:pattern1a} and~\ref{fig:pattern2a} really occur more often for small incidence angles and gas rocks? What phases (regular, irregular, or reversed) are the most common for specified crack densities and incidence angles?
To answer these questions, we use twenty models of interfaces between VTI elastic backgrounds. The values of stiffnesses were measured in a laboratory by~\citet{Wang}. 
In such models, we increase crack concentration in either lower or upper halfspace to obtain forty CAVA patterns for each incidence angle. We examine seven specific incidences, where $\theta\in[1^\circ,45^\circ]$\,; hence, in total, we verify two hundred eighty patterns.

Results of numerical experiments obtained for $\theta=15^\circ$ are presented in Table~\ref{tab:15}. 
Findings for the other six incidence angles are exhibited in Appendix~\ref{ap:ch8_one}.
Additionally, a \MATLAB code used to obtain the results is shown in Appendix~\ref{ap:ch8_two}.
Herein, as an example, we analyse the aforementioned table only.
Twenty models of interfaces consist of diverse geological scenarios (first column).
We examine a variety of elastic backgrounds that correspond to sedimentary rocks; sands, shales, coals, limestones, and dolomites. 
The halfspaces may be either gas or brine saturated.
Each background has a symbol assigned (second column) so that its stiffnesses can be extracted from~\citet{Wang} directly (we took values for the lowest overburden pressure).
The third and fourth columns give us important information on $R_{pp}$\,, so we infer what CAVA pattern should be expected (fifth column).
We increase crack concentration in either halfspace and obtain the actual pattern (sixth column) along with crack densities that correspond to phase boundaries (two last columns). 

To gain more insight into Table~\ref{tab:15}, consider model number one with cracks embedded in the lower halfspace.
The reflection coefficient is negative and decreases with growing crack concentration. 
We can expect either pattern from Figure~\ref{fig:pattern1}, since the incidence is neither very small nor large.
Upon continuous increase of $e$\,, CAVA reaches irregular phase at $e_{irr}=0.16$ and reversed phase at $e_{rr}=0.20$\,. 
Despite the occurrence of all phases, the pattern is shortened since the last two shapes indicated by dashed arrows in Figure~\ref{fig:pattern1b} do not appear.
This time, consider model number two with cracks again embedded in the lower halfspace having the same stiffnesses as in model one.
Both examples differ by the saturation of the upper halfspace only.
It occurs that the magnitude of $R_{pp}$ at $e=0$ is positive so that the expected pattern is different as compared to the previous example.
However, crack densities $e_{irr}$ and $e_{rr}$ stay the same.
The discussed results show that halfspace's saturation influences the magnitude of azimuthal variations, but not the shape.
This phenomenon can be explained easily.
The magnitude of the reflection coefficient changes since the isotropic term $R_{ipp}$ depends on both halfspaces.
On the other hand, $\Delta R_{pp}$ from expression~(\ref{DRpp})--- responsible for CAVA shape---depends on the stiffnesses from one halfspace only (identical for both models).
In Table~\ref{tab:15}, we present thirty distinct halfspaces; hence, we provide thirty independent measures of $e_{irr}$ and $e_{rr}$\,. 
\begin{table}[!htbp]
\scalebox{0.99}{
\begin{tabular}
{cccccccc}
\toprule
model & \multirow{1}{*}{halfspaces} & $R_{pp}$&$R_{pp}$ mono-&expected & actual &\multirow{2}{*}{$e_{ir}$} & \multirow{2}{*}{$e_{rr}$}\\
  nr&(upper/lower)& at $e=0$ &tonicity&pattern & pattern &   & \\
 \cmidrule{1-8}
\multirow{2}{*}{1}&b. sand (E5)/&\multirow{2}{*}{negative}&increasing&Fig.~\ref{fig:pattern2}&Fig.~\ref{fig:pattern2a}$^*$&--- & --- \\
&\,\,\,b. sand (E2) &&decreasing& Fig.~\ref{fig:pattern1} &Fig.~\ref{fig:pattern1b}$^*$&$0.16$ &0.20 \\
\cmidrule{1-8}
\multirow{2}{*}{2}&g. sand (E5)/&\multirow{2}{*}{positive}&non mono.&Fig.~\ref{fig:pattern1}&Fig.~\ref{fig:pattern1a}$^*$&--- & --- \\
&\,\,\,b. sand (E2) &&decreasing& Fig.~\ref{fig:pattern2} &Fig.~\ref{fig:pattern2b}$^*$&0.16 &0.20 \\
\cmidrule{1-8}
\multirow{2}{*}{3}&b. limestone (1)/&\multirow{2}{*}{negative}&increasing&Fig.~\ref{fig:pattern2}&Fig.~\ref{fig:pattern2b}&$0.17$ & 0.22 \\
&\,\,\,b. limestone (2) &&decreasing&Fig.~\ref{fig:pattern1}&Fig.~\ref{fig:pattern1b}$^*$&$0.12$ &0.15 \\
\cmidrule{1-8}
\multirow{2}{*}{4}&g. limestone (1)/&\multirow{2}{*}{positive}&increasing&Fig.~\ref{fig:pattern1}&Fig.~\ref{fig:pattern1b}$^*$&$0.10$ & $0.13$ \\
&\,\,\,b. limestone (2) &&decreasing&Fig.~\ref{fig:pattern2} &Fig.~\ref{fig:pattern2b}$^*$&$0.12$ &0.15\\
\cmidrule{1-8}
\multirow{2}{*}{5}&b. shale (B1)/&\multirow{2}{*}{positive}&increasing&Fig.~\ref{fig:pattern1} &Fig.~\ref{fig:pattern1b}$^*$&0.14& 0.18\\
&\,\,\,b. shale (B2)&&non mono.&Fig.~\ref{fig:pattern2} &Fig.~\ref{fig:pattern2a}$^*$&---&---  \\
\cmidrule{1-8}
\multirow{2}{*}{6}&b. shale (G3)/&\multirow{2}{*}{positive}&non mono.&Fig.~\ref{fig:pattern1} &Fig.~\ref{fig:pattern1a}$^*$& ---& --- \\
&\,\,\,b. shale (G5)&&non mono.&Fig.~\ref{fig:pattern2} &Fig.~\ref{fig:pattern2a}$^*$&---&--- \\
\cmidrule{1-8}
\multirow{2}{*}{7}&b. shale (E1)/&\multirow{2}{*}{positive}&increasing&Fig.~\ref{fig:pattern1}&Fig.~\ref{fig:pattern1b}$^*$& 0.78&$>1$\\
&\,\,\,b. shale (E5)&&decreasing&Fig.~\ref{fig:pattern2}&Fig.~\ref{fig:pattern2b}$^*$& $>1$&--- \\
\cmidrule{1-8}
\multirow{2}{*}{8}&b. sand (E5)/&\multirow{2}{*}{positive}&non mono.&Fig.~\ref{fig:pattern1}&Fig.~\ref{fig:pattern1a}$^*$&---&--- \\
&\,\,\,b. shale (E5)&&decreasing&Fig.~\ref{fig:pattern2}&Fig.~\ref{fig:pattern2b}$^*$&$>1$ & ---\\
\cmidrule{1-8}
\multirow{2}{*}{9}&g. sand (E5)/&\multirow{2}{*}{positive}&non mono.&Fig.~\ref{fig:pattern1}&Fig.~\ref{fig:pattern1a}$^*$&--- &---  \\
&\,\,\,b. shale (E5)&&decreasing&Fig.~\ref{fig:pattern2}&Fig.~\ref{fig:pattern2b}$^*$&$>1$&---  \\
\cmidrule{1-8}
\multirow{2}{*}{10}&g. sand (G8)/&\multirow{2}{*}{positive}&non mono.&Fig.~\ref{fig:pattern1}&Fig.~\ref{fig:pattern1a}$^*$& ---& --- \\
&\,\,\,b. sand (G8)&&decreasing&Fig.~\ref{fig:pattern2}&Fig.~\ref{fig:pattern2a}$^*$&---&---\\
\cmidrule{1-8}
\multirow{2}{*}{11}&g. sand (G14)/&\multirow{2}{*}{negative}&non mono.&Fig.~\ref{fig:pattern2}&Fig.~\ref{fig:pattern2a}$^*$&---&---\\
&\,\,\,g. sand (G16)&&non mono.&Fig.~\ref{fig:pattern1}&Fig.~\ref{fig:pattern1a}$^*$& ---& --- \\
\cmidrule{1-8}
\multirow{2}{*}{12}&g. coal (G31)/&\multirow{2}{*}{positive}&non mono.&Fig.~\ref{fig:pattern1}&Fig.~\ref{fig:pattern1b}$^*$& 0.04&0.05  \\
&\,\,\,b. coal (G31)&&non mono.&Fig.~\ref{fig:pattern2}&Fig.~\ref{fig:pattern2b}$^*$& 0.31& 0.48 \\
\cmidrule{1-8}
\multirow{2}{*}{13}&g. limestone (9)/&\multirow{2}{*}{positive}&increasing&Fig.~\ref{fig:pattern1}&Fig.~\ref{fig:pattern1a}$^*$&---&---  \\
&\,\,\,g. limestone (10)&&decreasing&Fig.~\ref{fig:pattern2}&Fig.~\ref{fig:pattern2b}$^*$&0.10 &0.13  \\
\cmidrule{1-8}
\multirow{2}{*}{14}&b. limestone (9)/&\multirow{2}{*}{positive}&increasing&Fig.~\ref{fig:pattern1}&Fig.~\ref{fig:pattern1b}$^*$&0.10 &0.12 \\
&\,\,\,g. limestone (10)&&decreasing&Fig.~\ref{fig:pattern2}&Fig.~\ref{fig:pattern2b}&0.10 &0.13  \\
\cmidrule{1-8}
\multirow{2}{*}{15}&b. limestone (22)/&\multirow{2}{*}{positive}&increasing&Fig.~\ref{fig:pattern1}&Fig.~\ref{fig:pattern1b}$^*$& 0.20& 0.26 \\
&\,\,\,b. dolomite (23)&&decreasing&Fig.~\ref{fig:pattern2}&Fig.~\ref{fig:pattern2b}$^*$&0.10  &0.13  \\
\cmidrule{1-8}
\multirow{2}{*}{16}&g. limestone (22)/&\multirow{2}{*}{positive}&increasing&Fig.~\ref{fig:pattern1}&Fig.~\ref{fig:pattern1b}$^*$& 0.06&0.08   \\
&\,\,\,b. dolomite (23)&&decreasing&Fig.~\ref{fig:pattern2}&Fig.~\ref{fig:pattern2b}$^*$& 0.10&0.13  \\
\cmidrule{1-8}
\multirow{2}{*}{17}&g. dolomite (28)/&\multirow{2}{*}{negative}&increasing&Fig.~\ref{fig:pattern2}&Fig.~\ref{fig:pattern2b}& 0.14 & 0.18\\
&\,\,\,g. dolomite (29)&&decreasing&Fig.~\ref{fig:pattern1}&Fig.~\ref{fig:pattern1a}$^*$&---& ---  \\
\cmidrule{1-8}
\multirow{2}{*}{18}&b. dolomite (28)/&\multirow{2}{*}{negative}&increasing&Fig.~\ref{fig:pattern2}&Fig.~\ref{fig:pattern2b}$^*$& 0.12& 0.15\\
&\,\,\,g. dolomite (29)&&decreasing& Fig.~\ref{fig:pattern1}&Fig.~\ref{fig:pattern1a}$^*$&--- &---  \\
\cmidrule{1-8}
\multirow{2}{*}{19}&b. dolomite (31)/&\multirow{2}{*}{negative}&increasing&Fig.~\ref{fig:pattern2}&Fig.~\ref{fig:pattern2b}&0.15 &0.19 \\
&\,\,\,b. limestone (32)&&decreasing&Fig.~\ref{fig:pattern1}&Fig.~\ref{fig:pattern1b}$^*$& 0.35&0.49  \\
\cmidrule{1-8}
\multirow{2}{*}{20}&g. dolomite (31)/&\multirow{2}{*}{positive}&increasing&Fig.~\ref{fig:pattern1}&Fig.~\ref{fig:pattern1b}$^*$&0.09&0.11  \\
&\,\,\,b. limestone (32)&&decreasing&Fig.~\ref{fig:pattern2}&Fig.~\ref{fig:pattern2b}$^*$&0.35 & 0.49 \\
\bottomrule
\end{tabular}
}
\caption[\small{Verification of the conjectured CAVA patterns: $15^\circ$ of incidence }]{\footnotesize{To verify the conjectured CAVA patterns, we propose twenty models of cracked media. Each model has embedded cracks in either upper or lower background, so that approximated critical density parameters ($e_{ir}$ and $e_{rr}$) for each possibility are obtained (forty cases in total). Backgrounds are brine (b.) or gas (g.) saturated. An asterisk indicates a shortened pattern not shown explicitly in Figures~\ref{fig:pattern1} and~\ref{fig:pattern2}. 
A small/moderate incidence angle, $\theta=15^\circ$\,, is chosen. }}
\label{tab:15}
\end{table}

Based on two hundred eighty examples from Table~\ref{tab:15} and Appendix~\ref{ap:ch8_one}, we infer that our conjectured, full, and shortened CAVA patterns---in a great majority of cases---are correct.
In one example only (for $\theta=30^\circ$), an alternative pattern is needed.
It was caused by predominantly increasing (instead of decreasing) $R_{pp}^{\ell}$ with growing $e$\,.
Our conjectured patterns are also sufficient in other examples of non-monotonic $\partial_eR_{pp}$ (see the fourth column).
To answer the rest of the questions posed at the beginning of this section, we propose to condense the numerical results in Figures~\ref{fig:short}--\ref{fig:e}.
  
Figure~\ref{fig:short} shows that in the majority of cases, CAVA patterns are shortened.
Full patterns are more probable for larger incidences than small ones.
Also, it illustrates that patterns from Figures~\ref{fig:pattern1a} and~\ref{fig:pattern2a} are less likely to occur than patterns from Figures~\ref{fig:pattern1b} and~\ref{fig:pattern2b}.
As we have expected in the previous section, Figures~\ref{fig:pattern1a} and~\ref{fig:pattern2a} are typical for small $\theta$\,.
They do not occur for very large incidences.
\begin{figure}[!htbp]
\centering
\includegraphics[width=0.55\textwidth]{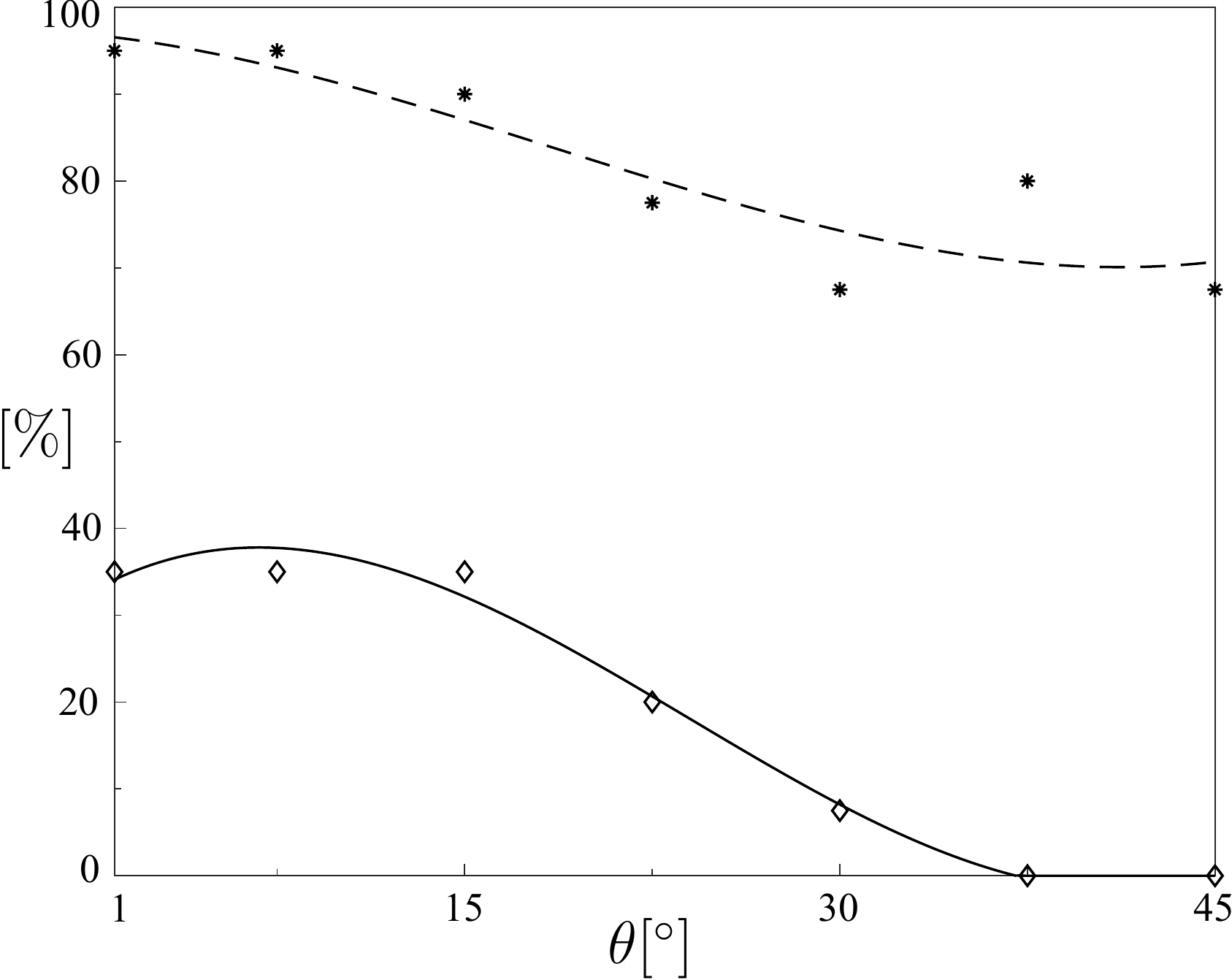}
\caption[\small{Percentage of cracked halfspaces: Shortened patterns followed}]{\small{Asterisks indicate the percentage of cracked halfspaces, where CAVA follows the shortened patterns. Results are obtained for seven incidence angles, and a dashed line shows the trend for $\theta\in[1^\circ,45^\circ]$\,.
Diamonds indicate the percentage of cracked halfspaces, where CAVA  follows the pattern from Figures~\ref{fig:pattern1a} or~\ref{fig:pattern2a}. 
A solid line proposes a possible trend.
}}
\label{fig:short}
\end{figure}
\begin{figure}[!htbp]
\centering
\includegraphics[width=0.55\textwidth]{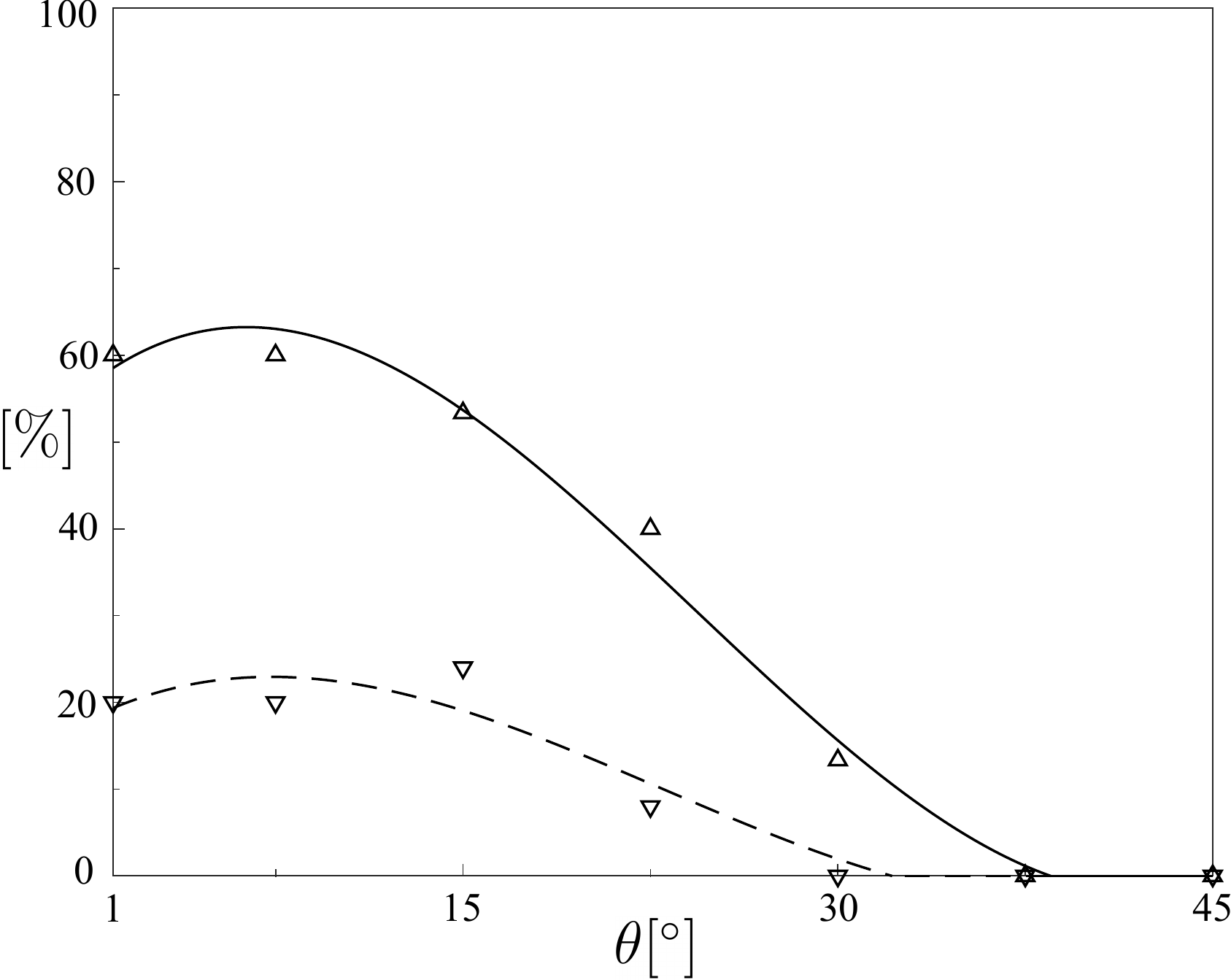}
\caption[\small{Percentage of gas or brine-saturated halfspaces: Figures~\ref{fig:pattern1a} or~\ref{fig:pattern2a} followed}]{\small{Upward pointing triangles indicate the percentage of gas-saturated cracked halfspaces, where CAVA follows the pattern from Figures~\ref{fig:pattern1a} or~\ref{fig:pattern2a}. Results are obtained for seven incidence angles, and a solid line shows the trend for $\theta\in[1^\circ,45^\circ]$\,.
Analogously, downward-pointing triangles correspond to brine-saturated cracked halfspaces, whereas a dashed line is the trend.}}
\label{fig:gas}
\end{figure}
\begin{figure}[!htbp]
\centering
\begin{subfigure}{.48\textwidth}
  \centering
\includegraphics[width=1\textwidth]{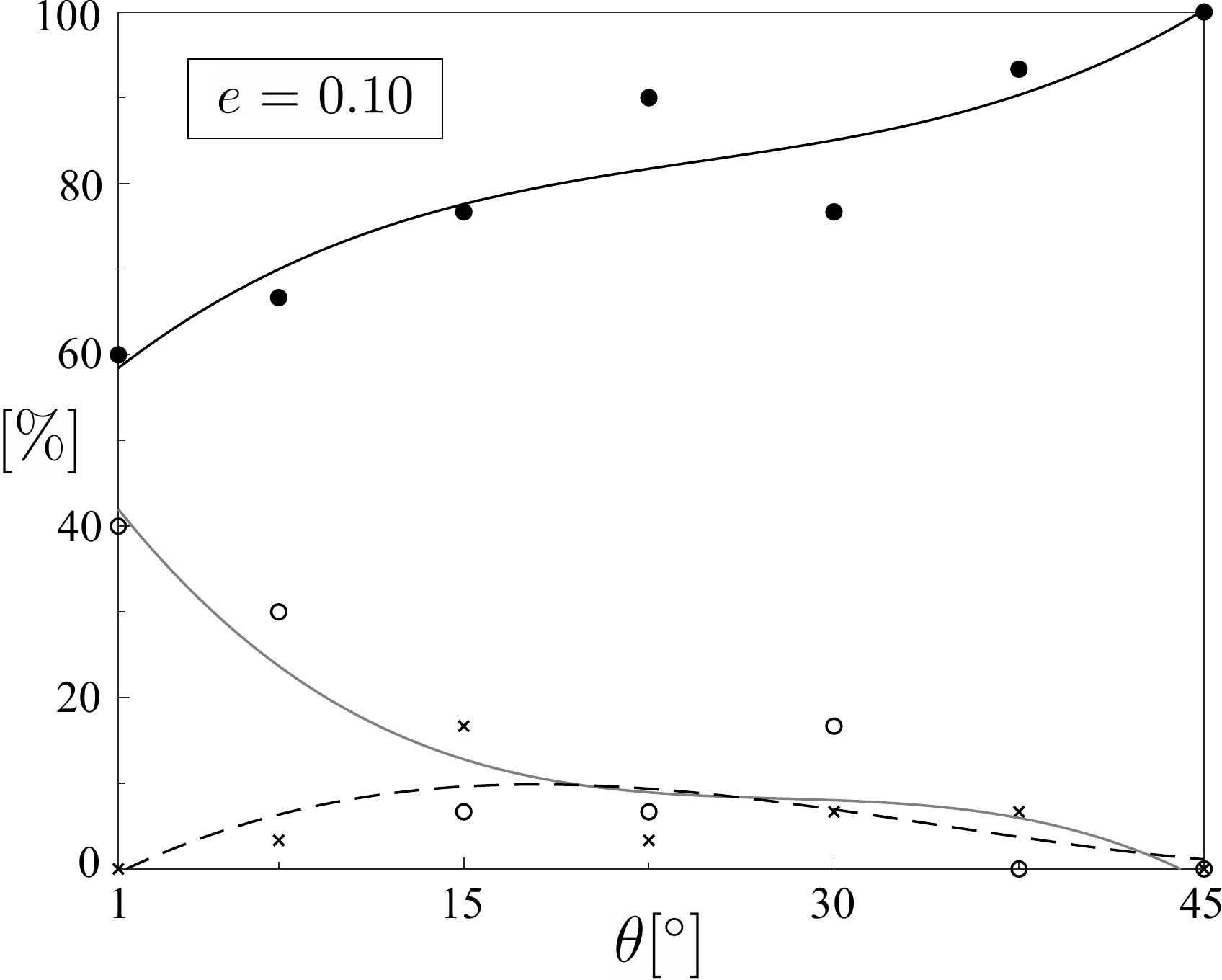}
\end{subfigure}
\quad
\begin{subfigure}{.48\textwidth}
  \centering
\includegraphics[width=1\textwidth]{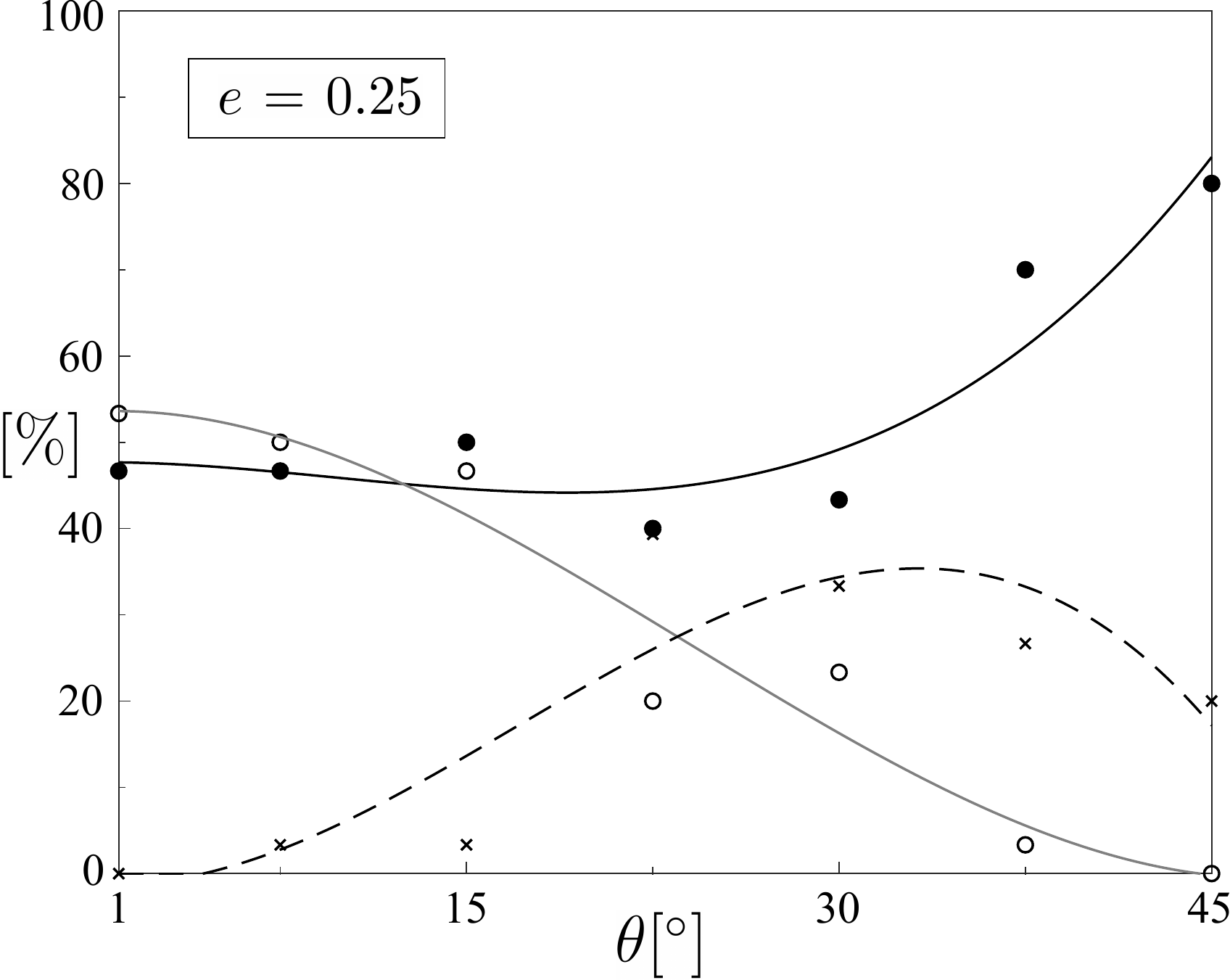}
\end{subfigure}
\caption[\small{Percentage of cracked halfspaces: Regular, irregular, reversed azimuthal variations}]{\small{Percentage of cracked halfspaces, where CAVA is regular (filled circles), irregular (crosses), or reversed (empty circles). Either moderate concentration of cracks $e=0.10$ (graph on the left) or large crack density $e=0.25$ (graph on the right) is assumed.
Results are obtained for seven specific incidences. Possible trends are proposed; solid black line corresponds to the regular phase, the dashed line indicates the irregularity, and the solid grey line shows the reversed phase. 
}}
\label{fig:e}
\end{figure}

Figure~\ref{fig:gas} illustrates that for small angles, sixty percent of gas-bearing and cracked halfspaces exhibit CAVA patterns from Figures~\ref{fig:pattern1a} and~\ref{fig:pattern2a}.  
This percentage is much smaller for brine-saturated rocks.
For both saturations, the percentage decreases with increasing incidence.
Knowing that cracks are embedded in a brine-saturated background, we can expect patterns from Figures~\ref{fig:pattern1b} and~\ref{fig:pattern2b} (for any $\theta$).
On the other hand, if we know that cracks are embedded in a gas-saturated background, for large $\theta$ we should expect Figures~\ref{fig:pattern1b} and~\ref{fig:pattern2b}, but for small incidence, any pattern is probable.
Considering the above, if CAVA belongs to patterns from Figures~\ref{fig:pattern1a} and~\ref{fig:pattern2a}, we should expect that rocks are saturated by gas.

To answer the last question posed in this section, we present Figure~\ref{fig:e}.
It shows that with growing $e$, irregular and reversed phases become more frequent, but the regular phase is usually predominant (except small incidence and large crack concentration).
In general, the irregular phase is the most frequent for moderate angles $\theta\in[15^\circ,30^\circ]$\,, whereas the reversed phase for very small incidences.
For $e=0.10$ irregular phase may be present up to every sixth case $(\theta=15^\circ)$.
Reversed CAVA can be demonstrated by up to forty percent interfaces $(\theta=1^\circ)$\,.
For $e=0.25$ irregularity may occur in up to fourty percent of cases ($\theta=22.5^\circ$)\,, whereas the reversed phase in more than fifty percent of examples $(\theta=1^\circ)$\,.
Larger concentrations than $e=0.25$ are not illustrated due to the dubious accuracy of the NIA in such cases.

Having verified our patterns and answered all the essential questions regarding the nature of azimuthal variations, let us summarise the key findings regarding gas exploration that interest many geophysicists dealing with cracked media.
First, the saturation of cracked media changes the magnitude of variations, but not its shape.
Second, knowing about the presence of gas, we cannot infer the right CAVA shape.
Third, knowing CAVA that magnitude and shape is specific only for Figures~\ref{fig:pattern1a} and~\ref{fig:pattern2a}, with significant probability, we can expect gas saturation.
Fourth, patterns from the aforementioned figures do not occur for large incidence angles.

At the end of the previous section, we have mentioned the example of CAVA attributes (sign and shape) characteristic for gas-bearing rocks only.
Herein, we extract all CAVA that appear in Figures~\ref{fig:pattern1a} and~\ref{fig:pattern2a}, but are absent in~Figures~\ref{fig:pattern1b} and~\ref{fig:pattern2b}.
Therefore,
Figure~\ref{fig:signature} gathers all variations characteristic for gas presence.
Again, the existence of gas-bearing rocks does not assure these shapes.
However, CAVA from Figure~\ref{fig:signature} can be treated as a gas indicator.
\begin{figure}[!htbp]
\includegraphics[width=1\textwidth]{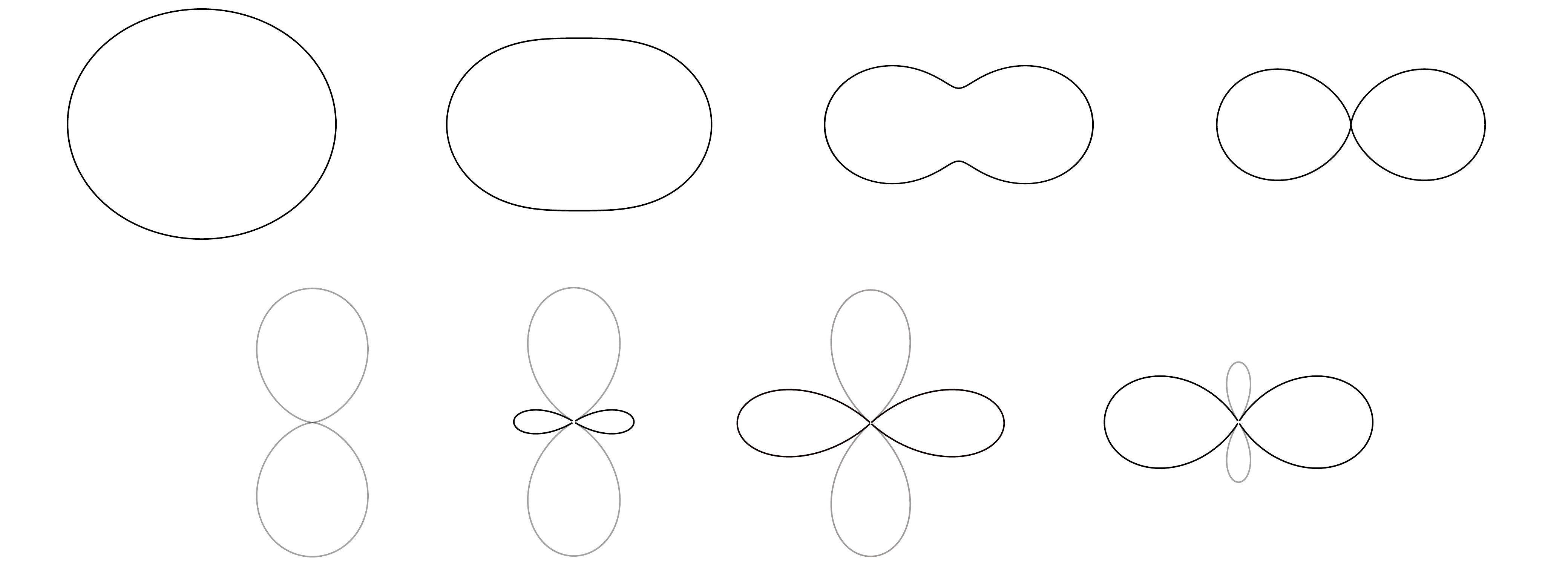}
\caption[\small{Azimuthal variations of amplitude characteristic for gas-bearing rocks}]{\small{Azimuthal variations of amplitude characteristic for gas-bearing rocks. The set of cracks is vertical and normal to the $x_1$-axis. If cracks are embedded in the lower halfspace, then black and grey colours correspond to positive and negative reflection coefficient, respectively. If cracks are situated in the upper halfspace, the meaning of colours is the opposite. }}
\label{fig:signature}
\end{figure} 
%
%
\section{Conclusions}
We have analysed the effect of crack concentration on the PP-wave reflection coefficient variations with azimuth.
Such effect differs depending on the incidence angle and stiffnesses of the cracked medium (influenced by the rock saturation).
We have assumed a single set of vertically aligned cracks with a normal to the $x_1$-axis.
We have examined cracks embedded in one halfspace only, either isotropic or anisotropic (VTI), employing the effective medium theory.

We have proposed and verified patterns of two-dimensional azimuthal variations of amplitude changing with increasing crack concentration upon thorough analytical and numerical analysis.
We have recognised patterns typical for small incidence and gas saturation, and schemes characteristic for large incidence and brine-bearing rocks.
Certain azimuthal variations (sign and shape) are present solely in the patterns typical for gas saturation.
We have indicated eight shapes characteristic for cracks situated in the gas-bearing halfspace. 

We have also noticed that the reflection coefficient may have extreme absolute values in directions other than parallel or perpendicular to cracks. An irregular variation occurs in such cases, which is more frequent for moderate incidences and large crack concentration. 

We are aware of the limitations imposed on our findings.
Vavry\v{c}uk-P\v{s}en\v{c}\'{i}k  approximation of the PP-wave reflection coefficient that we use assumes weak anisotropy and weak elastic contrasts at the interfaces.
Such simplifications are needed to perform a fruitful analytical analysis.
Moreover, we assume the non-interactive approximation that is inaccurate for larger concentrations of cracks.
Patterns proposed by us are valid for cracks with a normal parallel to the $x_1$-axis only.
However, analogical patterns can be obtained for other orientations of cracks, using methods from this paper.
The shapes from our patterns should be rotated by the angle equal to the deviation of cracks from the $x_1$-axis.
Further, we expect that the CAVA effect caused by several sets of cracks is a kind of superposition of patterns corresponding to each set.
In the future, we aim to verify our anticipations.
Also, we intend to provide real data examples to examine the findings and conjectures shown herein.  %
\section*{Acknowledgements}
We wish to acknowledge discussions with Michael A. Slawinski.
The research was done in the context of The Geomechanics Project partially supported by the Natural Sciences and Engineering Research Council of Canada, grant 202259.

\bibliography{bibliography}

\begin{thebibliography}{}

\bibitem[\protect\astroncite{Adamus}{2020}]{Adamus}
Adamus, F.~P. (2020).
\newblock Orthotropic anisotropy: on contributions of elasticity parameters to
  a difference in quasi-{P}-wave-squared velocities resulted from propagation
  in two orthogonal symmetry planes.
\newblock {\em Geophysical Prospecting}, 68(8):2361--2378.

\bibitem[\protect\astroncite{Aki and Richards}{1980}]{AkiRichards}
Aki, K. and Richards, K.~G. (1980).
\newblock {\em Quantitative Seismology: Theory and Methods}.
\newblock W. H. Freeman.

\bibitem[\protect\astroncite{Bruggeman}{1937}]{Bruggeman}
Bruggeman, D. A.~G. (1937).
\newblock Berechnung verschiedener physikalischer {K}onstanten von heterogenen
  {S}ubstanzen. {III}. {D}ie elastischen {K}onstanten der quasiisotropen
  {M}ischk{\"o}rper aus isotropen {S}ubstanzen.
\newblock {\em Annalen der Physik}, 421(2):160--178.

\bibitem[\protect\astroncite{Chen et~al.}{2017}]{RppInv}
Chen, H., Zhang, G., Ji, Y., and Yin, X. (2017).
\newblock Azimuthal {S}eismic {A}mplitude {D}ifference {I}nversion for
  {F}racture {W}eakness.
\newblock {\em Pure and Applied Geophysics}, 174(1):279--291.

\bibitem[\protect\astroncite{Chopra and Castagna}{2014}]{AVO}
Chopra, S. and Castagna, J.~P. (2014).
\newblock {\em {AVO}}.
\newblock Investigations in Geophysics Series No. 16. Society of Exploration
  Geophysicists.

\bibitem[\protect\astroncite{Eshelby}{1957}]{Eshelby}
Eshelby, J.~D. (1957).
\newblock The determination of the elastic field of an ellipsoidal inclusion,
  and related problems.
\newblock {\em Proceedings of the Royal Society A}, 241(1226):376--396.

\bibitem[\protect\astroncite{Grechka and Kachanov}{2006}]{GrechkaKachanov}
Grechka, V. and Kachanov, M. (2006).
\newblock Seismic characterization of multiple fracture sets: {D}oes orthotropy
  suffice?
\newblock {\em Geophysics}, 71(3):D93--D105.

\bibitem[\protect\astroncite{Guo et~al.}{2019}]{Guo}
Guo, J., Han, T., Fu, L.-Y., Xu, D., and Fang, X. (2019).
\newblock Effective {E}lastic {P}roperties of {R}ocks {W}ith {T}ransversely
  {I}sotropic {B}ackground {P}ermeated by {A}ligned {P}enny--{S}haped {C}racks.
\newblock {\em Journal of Geophysical Research: Solid Earth}, 124(1):400--424.

\bibitem[\protect\astroncite{Hudson}{1980}]{Hudson80}
Hudson, J.~A. (1980).
\newblock Overall properties of a cracked solid.
\newblock {\em Mathematical Proceedings of the Cambridge Philosophical
  Society}, 88(2):371--384.

\bibitem[\protect\astroncite{Ji and Zong}{2019}]{RppInvORT}
Ji, L. and Zong, Z. (2019).
\newblock Model parameterization and amplitude variation with angle and azimuth
  inversion for orthotropic parameters.
\newblock {\em SEG expanded abstracts}.

\bibitem[\protect\astroncite{Kachanov}{1992}]{Kachanov}
Kachanov, M. (1992).
\newblock Effective elastic properties of cracked solids: critical review of
  some basic concepts.
\newblock {\em Applied Mechanics Review}, 45(8):304--335.

\bibitem[\protect\astroncite{Kachanov and Sevostianov}{2018}]{KS}
Kachanov, M. and Sevostianov, I. (2018).
\newblock {\em Micromechanics of Materials, with Applications}.
\newblock Springer.

\bibitem[\protect\astroncite{Kachanov et~al.}{1994}]{Kachanov1994}
Kachanov, M., Tsukrov, I., and Shafiro, B. (1994).
\newblock Effective moduli of solids with cavities of various shapes.
\newblock {\em Applied Mechanics Review}, 47(1):S151--S174.

\bibitem[\protect\astroncite{Psencik and Martins}{2001}]{PsencikMartins}
Psencik, I. and Martins, J.~L. (2001).
\newblock Properties of weak contrast {PP} reflection/transmission coefficients
  for weakly anisotropic elastic media.
\newblock {\em Studia Geophysica et Geodaetica}, 45(2):176--199.

\bibitem[\protect\astroncite{Ruger}{1998}]{Ruger}
Ruger, A. (1998).
\newblock Variation of {P}-wave reflectivity with offset and azimuth in
  anisotropic media.
\newblock {\em Geophysics}, 63(3):935--947.

\bibitem[\protect\astroncite{Ruger and Gray}{2014}]{RugerGray}
Ruger, A. and Gray, D. (2014).
\newblock {\em 2. Wide-azimuth amplitude-variation-with-offset analysis of
  anisotropic fractured reservoirs}.
\newblock Encyclopedia of {E}xploration {G}eophysics. Society of Exploration
  Geophysicists.

\bibitem[\protect\astroncite{Schoenberg and Douma}{1988}]{SchDouma}
Schoenberg, M. and Douma, J. (1988).
\newblock Elastic wave propagation in media with parallel fractures and aligned
  cracks.
\newblock {\em Geophysical Prospecting}, 36(6):571--590.

\bibitem[\protect\astroncite{Schoenberg and Helbig}{1997}]{SchHelbig}
Schoenberg, M. and Helbig, K. (1997).
\newblock Orthorhombic media: {M}odeling elastic wave behavior in a vertically
  fractured earth.
\newblock {\em Geophysics}, 62(6):1954--1974.

\bibitem[\protect\astroncite{Schoenberg and Protazio}{1992}]{SchProtazio}
Schoenberg, M. and Protazio, J. (1992).
\newblock '{Z}oeppritz' rationalized and generalized to anisotropy.
\newblock {\em Journal of Seismic Exploration}, 1:125--144.

\bibitem[\protect\astroncite{Schoenberg and Sayers}{1995}]{SchSayers}
Schoenberg, M. and Sayers, C.~M. (1995).
\newblock Seismic anisotropy of fractured rock.
\newblock {\em Geophysics}, 60(1):204--211.

\bibitem[\protect\astroncite{Schoenberg et~al.}{1999}]{SchEtAl}
Schoenberg, M.~A., Dean, S., and Sayers, C.~M. (1999).
\newblock Azimuth-dependent tuning of seismic waves reflected from fractured
  reservoirs.
\newblock {\em Geophysics}, 64(4):1160--1171.

\bibitem[\protect\astroncite{Sevostianov et~al.}{2005}]{Sev2005}
Sevostianov, I., Yilmaz, N., Kushch, V., and Levin, V. (2005).
\newblock Effective elastic properties of matrix composites with
  transversely-isotropic phases.
\newblock {\em International Journal of Solids and Structures}, 42(2):455--476.

\bibitem[\protect\astroncite{Slawinski}{2020}]{SlawinskiRedNEW}
Slawinski, M.~A. (2020).
\newblock {\em Waves and rays in elastic continua}.
\newblock World Scientific, 4th edn edition.

\bibitem[\protect\astroncite{Ursin and Haugen}{1996}]{UrsinHaugen}
Ursin, B. and Haugen, G.~V. (1996).
\newblock Weak-contrast approximation of the elastic scattering matrix in
  anisotropic media.
\newblock {\em Pageoph}, 148(3/4):685--714.

\bibitem[\protect\astroncite{Vavrycuk and Psencik}{1998}]{VavrycukPsencik}
Vavrycuk, V. and Psencik, I. (1998).
\newblock Pp-wave reflection coefficients in weakly anisotropic elastic media.
\newblock {\em Geophysics}, 63(6):2129--2141.

\bibitem[\protect\astroncite{Wang}{2002}]{Wang}
Wang, Z. (2002).
\newblock Seismic anisotropy in sedimentary rocks, part 2: {L}aboratory data.
\newblock {\em Geophysics}, 67(5):1423--1440.

\bibitem[\protect\astroncite{Xie et~al.}{2019}]{derivative}
Xie, C., Wang, E., Zhong, T., Yan, G., He, R., and Guo, T. (2019).
\newblock Using the azimuthal derivative of the amplitude for fracture
  detection.
\newblock {\em SEG expanded abstracts}.

\bibitem[\protect\astroncite{Zillmer et~al.}{1997}]{ZillmerEtAl}
Zillmer, M., Gajewski, D., and Kashtan, B.~M. (1997).
\newblock Reflection coefficients for weak anisotropic media.
\newblock {\em Geophysical Journal International}, 129(2):389--398.

\bibitem[\protect\astroncite{Zoeppritz}{1919}]{Zoeppritz}
Zoeppritz, K. (1919).
\newblock {\"U}ber {R}eflexion und {D}urchgang seismischer {W}ellen durch
  {U}nstetigkeitsfl{\"a}chen.
\newblock {\em Erdbebenwellen VII B. Nachrichten von der Gesellschaft der
  Wissenschaften zu G{\"o}ttingen, Mathematisch-Physikalische Klasse}, pages
  57--84.

\end{thebibliography}
\bibliographystyle{apa}
\begin{appendix}
\newpage
\section{Tables with numerical results}\label{ap:ch8_one}
\begin{table}[!htbp]
\scalebox{0.9}{
\begin{tabular}
{ccccccc}
 \cmidrule{2-7}
&\multicolumn{6}{c}{$\theta=1^\circ$}\\
\toprule
 \multirow{2}{*}{models} & $R_{pp}$&$R_{pp}$ mono-&expected & actual &\multirow{2}{*}{$e_{ir}$} & \multirow{2}{*}{$e_{rr}$}\\
  & at $e=0$ &tonicity&pattern & pattern &   & \\
 \cmidrule{1-7}
b. sand (E5)/&\multirow{2}{*}{negative}&increasing&Fig.~\ref{fig:pattern2a}&Fig.~\ref{fig:pattern2a}$^*$&--- & --- \\
\,\,\,b. sand (E2) &&decreasing& Fig.~\ref{fig:pattern1a} &Fig.~\ref{fig:pattern1b}$^*$&$0.12$ &0.12 \\
\cmidrule{1-7}
g. sand (E5)/&\multirow{2}{*}{positive}&increasing&Fig.~\ref{fig:pattern1a}&Fig.~\ref{fig:pattern1a}$^*$&--- & --- \\
\,\,\,b. sand (E2) &&decreasing& Fig.~\ref{fig:pattern2a} &Fig.~\ref{fig:pattern2b}$^*$&0.12 &0.12 \\
\cmidrule{1-7}
b. limestone (1)/&\multirow{2}{*}{negative}&increasing&Fig.~\ref{fig:pattern2a}&Fig.~\ref{fig:pattern2b}$^*$&$0.01$ & 0.01 \\
\,\,\,b. limestone (2) &&decreasing&Fig.~\ref{fig:pattern1a}&Fig.~\ref{fig:pattern1b}$^*$&$0.08$ &0.08 \\
\cmidrule{1-7}
g. limestone (1)/&\multirow{2}{*}{positive}&increasing&Fig.~\ref{fig:pattern1a}&Fig.~\ref{fig:pattern1b}$^*$&$0.06$ & $0.06$ \\
\,\,\,b. limestone (2) &&decreasing&Fig.~\ref{fig:pattern2a} &Fig.~\ref{fig:pattern2b}$^*$&$0.08$ &0.08\\
\cmidrule{1-7}
b. shale (B1)/&\multirow{2}{*}{positive}&increasing&Fig.~\ref{fig:pattern1a} &Fig.~\ref{fig:pattern1b}$^*$&0.09& 0.09\\
\,\,\,b. shale (B2)&& decreasing&Fig.~\ref{fig:pattern2a} &Fig.~\ref{fig:pattern2a}$^*$&---&---  \\
\cmidrule{1-7}
b. shale (G3)/&\multirow{2}{*}{positive}&increasing&Fig.~\ref{fig:pattern1a} &Fig.~\ref{fig:pattern1a}$^*$& ---& --- \\
\,\,\,b. shale (G5)&&decreasing&Fig.~\ref{fig:pattern2a} &Fig.~\ref{fig:pattern2a}$^*$&---&--- \\
\cmidrule{1-7}
b. shale (E1)/&\multirow{2}{*}{positive}&increasing&Fig.~\ref{fig:pattern1a}&Fig.~\ref{fig:pattern1b}$^*$& 0.68&0.68\\
\,\,\,b. shale (E5)&&decreasing&Fig.~\ref{fig:pattern2a}&Fig.~\ref{fig:pattern2b}& $>1$&$>1$ \\
\cmidrule{1-7}
b. sand (E5)/&\multirow{2}{*}{negative}&increasing&Fig.~\ref{fig:pattern2a}&Fig.~\ref{fig:pattern2a}$^*$&---&--- \\
\,\,\,b. shale (E5)&&decreasing&Fig.~\ref{fig:pattern1a}&Fig.~\ref{fig:pattern1b}$^*$&$>1$ & $>1$\\
\cmidrule{1-7}
g. sand (E5)/&\multirow{2}{*}{positive}&increasing&Fig.~\ref{fig:pattern1a}&Fig.~\ref{fig:pattern1a}$^*$&--- &---  \\
\,\,\,b. shale (E5)&&decreasing&Fig.~\ref{fig:pattern2a}&Fig.~\ref{fig:pattern2b}$^*$&$>1$&$>1$  \\
\cmidrule{1-7}
g. sand (G8)/&\multirow{2}{*}{positive}&increasing&Fig.~\ref{fig:pattern1a}&Fig.~\ref{fig:pattern1a}$^*$& ---& --- \\
\,\,\,b. sand (G8)&&decreasing&Fig.~\ref{fig:pattern2a}&Fig.~\ref{fig:pattern2b}$^*$&0.04&0.04\\
\cmidrule{1-7}
g. sand (G14)/&\multirow{2}{*}{negative}&increasing&Fig.~\ref{fig:pattern2a}&Fig.~\ref{fig:pattern2a}$^*$&---&---\\
\,\,\,g. sand (G16)&&decreasing&Fig.~\ref{fig:pattern1a}&Fig.~\ref{fig:pattern1a}$^*$& ---& --- \\
\cmidrule{1-7}
g. coal (G31)/&\multirow{2}{*}{positive}&increasing&Fig.~\ref{fig:pattern1a}&Fig.~\ref{fig:pattern1a}$^*$& ---&---  \\
\,\,\,b. coal (G31)&&decreasing&Fig.~\ref{fig:pattern2a}&Fig.~\ref{fig:pattern2b}$^*$& 0.23& 0.23 \\
\cmidrule{1-7}
g. limestone (9)/&\multirow{2}{*}{positive}&increasing&Fig.~\ref{fig:pattern1a}&Fig.~\ref{fig:pattern1a}$^*$&---&---  \\
\,\,\,g. limestone (10)&&decreasing&Fig.~\ref{fig:pattern2a}&Fig.~\ref{fig:pattern2b}$^*$&0.06 &0.06  \\
\cmidrule{1-7}
b. limestone (9)/&\multirow{2}{*}{positive}&increasing&Fig.~\ref{fig:pattern1a}&Fig.~\ref{fig:pattern1b}$^*$&0.05 &0.05 \\
\,\,\,g. limestone (10)&&decreasing&Fig.~\ref{fig:pattern2a}&Fig.~\ref{fig:pattern2b}$^*$&0.06 &0.06  \\
\cmidrule{1-7}
b. limestone (22)/&\multirow{2}{*}{positive}&increasing&Fig.~\ref{fig:pattern1a}&Fig.~\ref{fig:pattern1b}$^*$& 0.15& 0.15 \\
\,\,\,b. dolomite (23)&&decreasing&Fig.~\ref{fig:pattern2a}&Fig.~\ref{fig:pattern2b}$^*$&0.06  &0.06  \\
\cmidrule{1-7}
g. limestone (22)/&\multirow{2}{*}{positive}&increasing&Fig.~\ref{fig:pattern1a}&Fig.~\ref{fig:pattern1b}$^*$& 0.02&0.02   \\
\,\,\,b. dolomite (23)&&decreasing&Fig.~\ref{fig:pattern2a}&Fig.~\ref{fig:pattern2b}$^*$& 0.06&0.06  \\
\cmidrule{1-7}
g. dolomite (28)/&\multirow{2}{*}{negative}&increasing&Fig.~\ref{fig:pattern2a}&Fig.~\ref{fig:pattern2b}& 0.10 & 0.10\\
\,\,\,g. dolomite (29)&&decreasing&Fig.~\ref{fig:pattern1a}&Fig.~\ref{fig:pattern1a}$^*$&---& ---  \\
\cmidrule{1-7}
b. dolomite (28)/&\multirow{2}{*}{negative}&increasing&Fig.~\ref{fig:pattern2a}&Fig.~\ref{fig:pattern2b}$^*$& 0.08& 0.08\\
\,\,\,g. dolomite (29)&&decreasing& Fig.~\ref{fig:pattern1a}&Fig.~\ref{fig:pattern1a}$^*$&--- &---  \\
\cmidrule{1-7}
b. dolomite (31)/&\multirow{2}{*}{negative}&increasing&Fig.~\ref{fig:pattern2a}&Fig.~\ref{fig:pattern2b}$^*$&0.11 &0.11 \\
\,\,\,b. limestone (32)&&decreasing&Fig.~\ref{fig:pattern1a}&Fig.~\ref{fig:pattern1b}$^*$& 0.29&0.29  \\
\cmidrule{1-7}
g. dolomite (31)/&\multirow{2}{*}{positive}&increasing&Fig.~\ref{fig:pattern1a}&Fig.~\ref{fig:pattern1b}$^*$&0.05&0.05  \\
\,\,\,b. limestone (32)&&decreasing&Fig.~\ref{fig:pattern2a}&Fig.~\ref{fig:pattern2b}$^*$&0.29 & 0.29 \\
\bottomrule
\end{tabular}}
\caption[\small{Verification of the conjectured CAVA patterns: Various incidence angles}]{\footnotesize{To verify the conjectured CAVA patterns, we propose twenty models of cracked media. Each model has embedded cracks in either upper or lower background, so that approximated critical density parameters ($e_{ir}$ and $e_{rr}$) for each possibility are obtained. Backgrounds are brine (b.) or gas (g.) saturated. An asterisk indicates a shortened pattern not shown explicitly in Figures~\ref{fig:pattern1} and~\ref{fig:pattern2}. 
Various incidences are chosen, namely, $\theta=1^\circ$\,, $\theta=7.5^\circ\,$, $\theta=22.5^\circ\,$, $\theta=30^\circ\,$, $\theta=37.5^\circ$\,, and $\theta=45^\circ$\,.  }}
\label{tab:1}
\end{table}

\begin{table}[!htbp]
\scalebox{0.99}{
\begin{tabular}
{ccccccc}
\cmidrule{2-7}
&\multicolumn{6}{c}{$\theta=7.5^\circ$}\\
\toprule
 \multirow{2}{*}{models} & $R_{pp}$&$R_{pp}$ mono-&expected & actual &\multirow{2}{*}{$e_{ir}$} & \multirow{2}{*}{$e_{rr}$}\\
  & at $e=0$ &tonicity&pattern & pattern &   & \\
 \cmidrule{1-7}
b. sand (E5)/&\multirow{2}{*}{negative}&increasing&Fig.~\ref{fig:pattern2a}&Fig.~\ref{fig:pattern2a}$^*$&--- & --- \\
\,\,\,b. sand (E2) &&decreasing& Fig.~\ref{fig:pattern1a} &Fig.~\ref{fig:pattern1b}$^*$&$0.12$ &0.13 \\
\cmidrule{1-7}
g. sand (E5)/&\multirow{2}{*}{positive}&increasing&Fig.~\ref{fig:pattern1a}&Fig.~\ref{fig:pattern1a}$^*$&--- & --- \\
\,\,\,b. sand (E2) &&decreasing& Fig.~\ref{fig:pattern2a} &Fig.~\ref{fig:pattern2b}$^*$&0.12 &0.13 \\
\cmidrule{1-7}
b. limestone (1)/&\multirow{2}{*}{negative}&increasing&Fig.~\ref{fig:pattern2a}&Fig.~\ref{fig:pattern2b}$^*$&$0.13$ & 0.14 \\
\,\,\,b. limestone (2) &&decreasing&Fig.~\ref{fig:pattern1a}&Fig.~\ref{fig:pattern1b}$^*$&$0.08$ &0.09 \\
\cmidrule{1-7}
g. limestone (1)/&\multirow{2}{*}{positive}&increasing&Fig.~\ref{fig:pattern1a}&Fig.~\ref{fig:pattern1b}$^*$&$0.07$ & $0.08$ \\
\,\,\,b. limestone (2) &&decreasing&Fig.~\ref{fig:pattern2a} &Fig.~\ref{fig:pattern2b}$^*$&$0.08$ &0.09\\
\cmidrule{1-7}
b. shale (B1)/&\multirow{2}{*}{positive}&increasing&Fig.~\ref{fig:pattern1a} &Fig.~\ref{fig:pattern1b}$^*$&0.10& 0.11\\
\,\,\,b. shale (B2)&& decreasing&Fig.~\ref{fig:pattern2a} &Fig.~\ref{fig:pattern2a}$^*$&---&---  \\
\cmidrule{1-7}
b. shale (G3)/&\multirow{2}{*}{positive}&increasing&Fig.~\ref{fig:pattern1a} &Fig.~\ref{fig:pattern1a}$^*$& ---& --- \\
\,\,\,b. shale (G5)&&non mono.&Fig.~\ref{fig:pattern2a} &Fig.~\ref{fig:pattern2a}$^*$&---&--- \\
\cmidrule{1-7}
b. shale (E1)/&\multirow{2}{*}{positive}&increasing&Fig.~\ref{fig:pattern1a}&Fig.~\ref{fig:pattern1b}$^*$& 0.71&0.83\\
\,\,\,b. shale (E5)&&decreasing&Fig.~\ref{fig:pattern2a}&Fig.~\ref{fig:pattern2b}& $>1$&$>1$ \\
\cmidrule{1-7}
b. sand (E5)/&\multirow{2}{*}{negative}&increasing&Fig.~\ref{fig:pattern2a}&Fig.~\ref{fig:pattern2a}$^*$&---&--- \\
\,\,\,b. shale (E5)&&decreasing&Fig.~\ref{fig:pattern1a}&Fig.~\ref{fig:pattern1b}$^*$&$>1$ & $>1$\\
\cmidrule{1-7}
g. sand (E5)/&\multirow{2}{*}{positive}&non mono.&Fig.~\ref{fig:pattern1a}&Fig.~\ref{fig:pattern1a}$^*$&--- &---  \\
\,\,\,b. shale (E5)&&decreasing&Fig.~\ref{fig:pattern2a}&Fig.~\ref{fig:pattern2b}$^*$&$>1$&$>1$  \\
\cmidrule{1-7}
g. sand (G8)/&\multirow{2}{*}{positive}&increasing&Fig.~\ref{fig:pattern1a}&Fig.~\ref{fig:pattern1a}$^*$& ---& --- \\
\,\,\,b. sand (G8)&&decreasing&Fig.~\ref{fig:pattern2a}&Fig.~\ref{fig:pattern2b}$^*$&0.05&0.05\\
\cmidrule{1-7}
g. sand (G14)/&\multirow{2}{*}{negative}&non mono.&Fig.~\ref{fig:pattern2a}&Fig.~\ref{fig:pattern2a}$^*$&---&---\\
\,\,\,g. sand (G16)&&decreasing&Fig.~\ref{fig:pattern1a}&Fig.~\ref{fig:pattern1a}$^*$& ---& --- \\
\cmidrule{1-7}
g. coal (G31)/&\multirow{2}{*}{positive}&non mono.&Fig.~\ref{fig:pattern1a}&Fig.~\ref{fig:pattern1a}$^*$& ---&---  \\
\,\,\,b. coal (G31)&&decreasing&Fig.~\ref{fig:pattern2a}&Fig.~\ref{fig:pattern2b}$^*$& 0.25& 0.28 \\
\cmidrule{1-7}
g. limestone (9)/&\multirow{2}{*}{positive}&increasing&Fig.~\ref{fig:pattern1a}&Fig.~\ref{fig:pattern1a}$^*$&---&---  \\
\,\,\,g. limestone (10)&&decreasing&Fig.~\ref{fig:pattern2a}&Fig.~\ref{fig:pattern2b}$^*$&0.07 &0.07  \\
\cmidrule{1-7}
b. limestone (9)/&\multirow{2}{*}{positive}&increasing&Fig.~\ref{fig:pattern1a}&Fig.~\ref{fig:pattern1b}$^*$&0.06 &0.07 \\
\,\,\,g. limestone (10)&&decreasing&Fig.~\ref{fig:pattern2a}&Fig.~\ref{fig:pattern2b}$^*$&0.07 &0.07  \\
\cmidrule{1-7}
b. limestone (22)/&\multirow{2}{*}{positive}&increasing&Fig.~\ref{fig:pattern1a}&Fig.~\ref{fig:pattern1b}$^*$& 0.16& 0.17 \\
\,\,\,b. dolomite (23)&&decreasing&Fig.~\ref{fig:pattern2a}&Fig.~\ref{fig:pattern2b}$^*$&0.07  &0.07  \\
\cmidrule{1-7}
g. limestone (22)/&\multirow{2}{*}{positive}&increasing&Fig.~\ref{fig:pattern1a}&Fig.~\ref{fig:pattern1b}$^*$& 0.03&0.03   \\
\,\,\,b. dolomite (23)&&decreasing&Fig.~\ref{fig:pattern2a}&Fig.~\ref{fig:pattern2b}$^*$& 0.07&0.07  \\
\cmidrule{1-7}
g. dolomite (28)/&\multirow{2}{*}{negative}&increasing&Fig.~\ref{fig:pattern2a}&Fig.~\ref{fig:pattern2b}& 0.11 & 0.12\\
\,\,\,g. dolomite (29)&&decreasing&Fig.~\ref{fig:pattern1a}&Fig.~\ref{fig:pattern1a}$^*$&---& ---  \\
\cmidrule{1-7}
b. dolomite (28)/&\multirow{2}{*}{negative}&increasing&Fig.~\ref{fig:pattern2a}&Fig.~\ref{fig:pattern2b}$^*$& 0.08& 0.09\\
\,\,\,g. dolomite (29)&&decreasing& Fig.~\ref{fig:pattern1a}&Fig.~\ref{fig:pattern1a}$^*$&--- &---  \\
\cmidrule{1-7}
b. dolomite (31)/&\multirow{2}{*}{negative}&increasing&Fig.~\ref{fig:pattern2a}&Fig.~\ref{fig:pattern2b}$^*$&0.11 &0.12 \\
\,\,\,b. limestone (32)&&decreasing&Fig.~\ref{fig:pattern1a}&Fig.~\ref{fig:pattern1b}$^*$& 0.31&0.33  \\
\cmidrule{1-7}
g. dolomite (31)/&\multirow{2}{*}{positive}&increasing&Fig.~\ref{fig:pattern1a}&Fig.~\ref{fig:pattern1b}$^*$&0.05&0.06  \\
\,\,\,b. limestone (32)&&decreasing&Fig.~\ref{fig:pattern2a}&Fig.~\ref{fig:pattern2b}$^*$&0.31 & 0.33 \\
\bottomrule
\end{tabular}}
\label{tab:75}
\end{table}

\begin{table}[!htbp]
\scalebox{0.99}{
\begin{tabular}
{ccccccc}
\cmidrule{2-7}
&\multicolumn{6}{c}{$\theta=22.5^\circ$}\\
\toprule
 \multirow{2}{*}{models} & $R_{pp}$&$R_{pp}$ mono-&expected & actual &\multirow{2}{*}{$e_{ir}$} & \multirow{2}{*}{$e_{rr}$}\\
  & at $e=0$ &tonicity&pattern & pattern &   & \\
 \cmidrule{1-7}
b. sand (E5)/&\multirow{2}{*}{positive}&non mono.&Fig.~\ref{fig:pattern1}&Fig.~\ref{fig:pattern1b}&0.02 & 0.04\\
\,\,\,b. sand (E2) &&decreasing& Fig.~\ref{fig:pattern2} &Fig.~\ref{fig:pattern2b}&$0.22$ &0.40 \\
\cmidrule{1-7}
g. sand (E5)/&\multirow{2}{*}{positive}&non mono.&Fig.~\ref{fig:pattern1}&Fig.~\ref{fig:pattern1a}$^*$&--- & --- \\
\,\,\,b. sand (E2) &&decreasing& Fig.~\ref{fig:pattern2} &Fig.~\ref{fig:pattern2b}$^*$&0.22 &0.40 \\
\cmidrule{1-7}
b. limestone (1)/&\multirow{2}{*}{negative}&increasing&Fig.~\ref{fig:pattern2}&Fig.~\ref{fig:pattern2b}&$0.23$ & 0.45 \\
\,\,\,b. limestone (2) &&decreasing&Fig.~\ref{fig:pattern1}&Fig.~\ref{fig:pattern1b}$^*$&$0.17$ &0.33 \\
\cmidrule{1-7}
g. limestone (1)/&\multirow{2}{*}{positive}&increasing&Fig.~\ref{fig:pattern1}&Fig.~\ref{fig:pattern1b}$^*$&$0.15$ & $0.28$ \\
\,\,\,b. limestone (2) &&decreasing&Fig.~\ref{fig:pattern2} &Fig.~\ref{fig:pattern2b}&$0.17$ &0.33\\
\cmidrule{1-7}
b. shale (B1)/&\multirow{2}{*}{positive}&increasing&Fig.~\ref{fig:pattern1} &Fig.~\ref{fig:pattern1b}$^*$&0.20& 0.38\\
\,\,\,b. shale (B2)&& non mono.&Fig.~\ref{fig:pattern2} &Fig.~\ref{fig:pattern2a}$^*$&---&---  \\
\cmidrule{1-7}
b. shale (G3)/&\multirow{2}{*}{positive}&non mono.&Fig.~\ref{fig:pattern1} &Fig.~\ref{fig:pattern1b}& 0.07& 0.11 \\
\,\,\,b. shale (G5)&&non mono.&Fig.~\ref{fig:pattern2} &Fig.~\ref{fig:pattern2a}$^*$&---&--- \\
\cmidrule{1-7}
b. shale (E1)/&\multirow{2}{*}{positive}&increasing&Fig.~\ref{fig:pattern1}&Fig.~\ref{fig:pattern1b}$^*$& 0.92&---\\
\,\,\,b. shale (E5)&&decreasing&Fig.~\ref{fig:pattern2}&Fig.~\ref{fig:pattern2b}$^*$& $>1$&--- \\
\cmidrule{1-7}
b. sand (E5)/&\multirow{2}{*}{positive}&non mono.&Fig.~\ref{fig:pattern1}&Fig.~\ref{fig:pattern1b}&0.02&0.04\\
\,\,\,b. shale (E5)&&decreasing&Fig.~\ref{fig:pattern2}&Fig.~\ref{fig:pattern2b}$^*$&$>1$ & ---\\
\cmidrule{1-7}
g. sand (E5)/&\multirow{2}{*}{positive}&non mono.&Fig.~\ref{fig:pattern1a}&Fig.~\ref{fig:pattern1a}$^*$&--- &---  \\
\,\,\,b. shale (E5)&&decreasing&Fig.~\ref{fig:pattern2a}&Fig.~\ref{fig:pattern2b}$^*$&$>1$&---  \\
\cmidrule{1-7}
g. sand (G8)/&\multirow{2}{*}{positive}&non mono.&Fig.~\ref{fig:pattern1}&Fig.~\ref{fig:pattern1a}$^*$& ---& --- \\
\,\,\,b. sand (G8)&&non mono.&Fig.~\ref{fig:pattern2}&Fig.~\ref{fig:pattern2b}$^*$&0.14&0.26\\
\cmidrule{1-7}
g. sand (G14)/&\multirow{2}{*}{negative}&non mono.&Fig.~\ref{fig:pattern2}&Fig.~\ref{fig:pattern2a}$^*$&---&---\\
\,\,\,g. sand (G16)&&non mono.&Fig.~\ref{fig:pattern1}&Fig.~\ref{fig:pattern1a}$^*$& ---& --- \\
\cmidrule{1-7}
g. coal (G31)/&\multirow{2}{*}{positive}&non mono.&Fig.~\ref{fig:pattern1}&Fig.~\ref{fig:pattern1b}& 0.13&0.31  \\
\,\,\,b. coal (G31)&&non mono.&Fig.~\ref{fig:pattern2}&Fig.~\ref{fig:pattern2b}$^*$& 0.41& $>1$ \\
\cmidrule{1-7}
g. limestone (9)/&\multirow{2}{*}{positive}&non mono.&Fig.~\ref{fig:pattern1}&Fig.~\ref{fig:pattern1a}$^*$&---&---  \\
\,\,\,g. limestone (10)&&decreasing&Fig.~\ref{fig:pattern2}&Fig.~\ref{fig:pattern2b}$^*$&0.16 &0.29  \\
\cmidrule{1-7}
b. limestone (9)/&\multirow{2}{*}{positive}&increasing&Fig.~\ref{fig:pattern1}&Fig.~\ref{fig:pattern1b}$^*$&0.15 &0.28 \\
\,\,\,g. limestone (10)&&decreasing&Fig.~\ref{fig:pattern2}&Fig.~\ref{fig:pattern2b}&0.16 &0.29  \\
\cmidrule{1-7}
b. limestone (22)/&\multirow{2}{*}{positive}&increasing&Fig.~\ref{fig:pattern1}&Fig.~\ref{fig:pattern1b}$^*$& 0.26& 0.54 \\
\,\,\,b. dolomite (23)&&decreasing&Fig.~\ref{fig:pattern2}&Fig.~\ref{fig:pattern2b}$^*$&0.16  &0.29  \\
\cmidrule{1-7}
g. limestone (22)/&\multirow{2}{*}{positive}&increasing&Fig.~\ref{fig:pattern1}&Fig.~\ref{fig:pattern1b}$^*$& 0.11&0.19   \\
\,\,\,b. dolomite (23)&&decreasing&Fig.~\ref{fig:pattern2}&Fig.~\ref{fig:pattern2b}$^*$& 0.16&0.29  \\
\cmidrule{1-7}
g. dolomite (28)/&\multirow{2}{*}{negative}&increasing&Fig.~\ref{fig:pattern2}&Fig.~\ref{fig:pattern2b}$^*$& 0.20 & 0.38\\
\,\,\,g. dolomite (29)&&non mono.&Fig.~\ref{fig:pattern1}&Fig.~\ref{fig:pattern1b}$^*$&0.03&0.06\\
\cmidrule{1-7}
b. dolomite (28)/&\multirow{2}{*}{negative}&increasing&Fig.~\ref{fig:pattern2}&Fig.~\ref{fig:pattern2b}$^*$& 0.18& 0.32\\
\,\,\,g. dolomite (29)&&decreasing& Fig.~\ref{fig:pattern1}&Fig.~\ref{fig:pattern1b}$^*$&0.03 &0.06  \\
\cmidrule{1-7}
b. dolomite (31)/&\multirow{2}{*}{negative}&increasing&Fig.~\ref{fig:pattern2}&Fig.~\ref{fig:pattern2b}&0.21 &0.41 \\
\,\,\,b. limestone (32)&&decreasing&Fig.~\ref{fig:pattern1}&Fig.~\ref{fig:pattern1b}$^*$& 0.43&$>1$  \\
\cmidrule{1-7}
g. dolomite (31)/&\multirow{2}{*}{positive}&increasing&Fig.~\ref{fig:pattern1}&Fig.~\ref{fig:pattern1b}$^*$&0.14&0.25  \\
\,\,\,b. limestone (32)&&decreasing&Fig.~\ref{fig:pattern2}&Fig.~\ref{fig:pattern2b}$^*$&0.43 & $>1$ \\
\bottomrule
\end{tabular}}
\label{tab:225}
\end{table}

\begin{table}[!htbp]
\scalebox{0.99}{
\begin{tabular}
{ccccccc}
\cmidrule{2-7}
&\multicolumn{6}{c}{$\theta=30^\circ$}\\
\toprule
 \multirow{2}{*}{models} & $R_{pp}$& $R_{pp}$ mono-&expected & actual &\multirow{2}{*}{$e_{ir}$} & \multirow{2}{*}{$e_{rr}$}\\
  & at $e=0$ &tonicity&pattern & pattern &   & \\
 \cmidrule{1-7}
b. sand (E5)/&\multirow{2}{*}{positive}&non mono.&Fig.~\ref{fig:pattern1}&Fig.~\ref{fig:pattern1b}&$0.09$ & $0.30$ \\
\,\,\,b. sand (E2) && decreasing&Fig.~\ref{fig:pattern2} &Fig.~\ref{fig:pattern2b}&$0.31$ &$>1$ \\
\cmidrule{1-7}
g. sand (E5)/&\multirow{2}{*}{positive}&non mono.&Fig.~\ref{fig:pattern1} &Fig.~\ref{fig:pattern1a}$^*$&--- & --- \\
\,\,\,b. sand (E2) && decreasing&Fig.~\ref{fig:pattern2}  &Fig.~\ref{fig:pattern2b}$^*$&0.31 &$>1$ \\
\cmidrule{1-7}
b. limestone (1)/&\multirow{2}{*}{negative}&increasing&Fig.~\ref{fig:pattern2}&Fig.~\ref{fig:pattern2b}&$0.32$ & $>1$\\
\,\,\,b. limestone (2) &&decreasing& Fig.~\ref{fig:pattern1} &Fig.~\ref{fig:pattern1b}$^*$&$0.26$ &$>1$ \\
\cmidrule{1-7}
g. limestone (1)/&\multirow{2}{*}{positive}&increasing&Fig.~\ref{fig:pattern1} &Fig.~\ref{fig:pattern1b}$^*$ &$0.24$ & $>1$ \\
\,\,\,b. limestone (2) &&decreasing& Fig.~\ref{fig:pattern2}  &Fig.~\ref{fig:pattern2b}&$0.26$ &$>1$ \\
\cmidrule{1-7}
b. shale (B1)/&\multirow{2}{*}{positive}&increasing&Fig.~\ref{fig:pattern1}  &Fig.~\ref{fig:pattern1b}$^*$&0.29 &  $>1$\\
\,\,\,b. shale (B2)&&non mono.&Fig.~\ref{fig:pattern2}  &Fig.~\ref{fig:pattern2b}&0.06&0.19  \\
\cmidrule{1-7}
b. shale (G3)/&\multirow{2}{*}{negative}&non mono.&Fig.~\ref{fig:pattern2} &Fig.~\ref{fig:pattern2b}$^*$ & 0.15& --- \\
\,\,\,b. shale (G5)&&non mono.&Fig.~\ref{fig:pattern1} &none &0.01&0.02 \\
\cmidrule{1-7}
b. shale (E1)/&\multirow{2}{*}{positive}&increasing&Fig.~\ref{fig:pattern1}&Fig.~\ref{fig:pattern1b}$^*$& $>1$&--- \\
\,\,\,b. shale (E5)&&decreasing&Fig.~\ref{fig:pattern2}&Fig.~\ref{fig:pattern2b}$^*$& $>1$&--- \\
\cmidrule{1-7}
b. sand (E5)/&\multirow{2}{*}{positive}&non mono.&Fig.~\ref{fig:pattern1}&Fig.~\ref{fig:pattern1b}& 0.09 &0.30  \\
\,\,\,b. shale (E5)&&decreasing&Fig.~\ref{fig:pattern2}&Fig.~\ref{fig:pattern2b}$^*$& $>1$ & ---\\
\cmidrule{1-7}
g. sand (E5)/&\multirow{2}{*}{positive}&non mono.&Fig.~\ref{fig:pattern1}&Fig.~\ref{fig:pattern1a}$^*$&--- &---  \\
\,\,\,b. shale (E5)&&decreasing&Fig.~\ref{fig:pattern2}&Fig.~\ref{fig:pattern2b}$^*$&$>1$&---  \\
\cmidrule{1-7}
g. sand (G8)/&\multirow{2}{*}{positive}&non mono.&Fig.~\ref{fig:pattern1}&Fig.~\ref{fig:pattern1b}$^*$& 0.02& 0.04 \\
\,\,\,b. sand (G8)&&non mono.&Fig.~\ref{fig:pattern2}&Fig.~\ref{fig:pattern2b}&0.22&$>1$  \\
\cmidrule{1-7}
g. sand (G14)/&\multirow{2}{*}{negative}&non mono.&Fig.~\ref{fig:pattern2}&Fig.~\ref{fig:pattern2b}$^*$&0.03&0.06\\
\,\,\,g. sand (G16)&&decreasing&Fig.~\ref{fig:pattern1}&Fig.~\ref{fig:pattern1b}& 0.01& 0.03 \\
\cmidrule{1-7}
g. coal (G31)/&\multirow{2}{*}{positive}&non mono.&Fig.~\ref{fig:pattern1}&Fig.~\ref{fig:pattern1b}& 0.24&$>1$ \\
\,\,\,b. coal (G31)&&non mono.&Fig.~\ref{fig:pattern2}&Fig.~\ref{fig:pattern2b}$^*$& 0.56 & --- \\
\cmidrule{1-7}
g. limestone (9)/&\multirow{2}{*}{positive}&increasing&Fig.~\ref{fig:pattern1}&Fig.~\ref{fig:pattern1b}$^*$& 0.02& 0.07  \\
\,\,\,g. limestone (10)&&decreasing&Fig.~\ref{fig:pattern2}&Fig.~\ref{fig:pattern2b}$^*$&0.24 &$>1$  \\
\cmidrule{1-7}
b. limestone (9)/&\multirow{2}{*}{positive}&increasing&Fig.~\ref{fig:pattern1}&Fig.~\ref{fig:pattern1b}$^*$&0.24 & $>1$ \\
\,\,\,g. limestone (10)&&decreasing&Fig.~\ref{fig:pattern2}&Fig.~\ref{fig:pattern2b}& 0.24&$>1$  \\
\cmidrule{1-7}
b. limestone (22)/&\multirow{2}{*}{positive}&increasing&Fig.~\ref{fig:pattern1}&Fig.~\ref{fig:pattern1b}$^*$&0.36 &$>1$  \\
\,\,\,b. dolomite (23)&&decreasing&Fig.~\ref{fig:pattern2}&Fig.~\ref{fig:pattern2b}$^*$& 0.24 & $>1$ \\
\cmidrule{1-7}
g. limestone (22)/&\multirow{2}{*}{positive}&increasing&Fig.~\ref{fig:pattern1}&Fig.~\ref{fig:pattern1b}$^*$&0.19 & 0.76  \\
\,\,\,b. dolomite (23)&&non mono.&Fig.~\ref{fig:pattern2}&Fig.~\ref{fig:pattern2b}$^*$& 0.24&$>1$  \\
\cmidrule{1-7}
g. dolomite (28)/&\multirow{2}{*}{negative}&increasing&Fig.~\ref{fig:pattern2}&Fig.~\ref{fig:pattern2b}$^*$& 0.29 & $>1$\\
\,\,\,g. dolomite (29)&&non mono.&Fig.~\ref{fig:pattern1}&Fig.~\ref{fig:pattern1b}&0.11& 0.32  \\
\cmidrule{1-7}
b. dolomite (28)/&\multirow{2}{*}{negative}&non mono.&Fig.~\ref{fig:pattern2}&Fig.~\ref{fig:pattern2b}$^*$& 0.27& $>1$\\
\,\,\,g. dolomite (29)&&decreasing&Fig.~\ref{fig:pattern1}&Fig.~\ref{fig:pattern1b}$^*$&0.11 &0.32  \\
\cmidrule{1-7}
b. dolomite (31)/&\multirow{2}{*}{negative}&increasing&Fig.~\ref{fig:pattern2}&Fig.~\ref{fig:pattern2b}&0.31 &$>1$ \\
\,\,\,b. limestone (32)&&decreasing&Fig.~\ref{fig:pattern1}&Fig.~\ref{fig:pattern1b}$^*$& 0.56&$>1$  \\
\cmidrule{1-7}
g. dolomite (31)/&\multirow{2}{*}{positive}&increasing&Fig.~\ref{fig:pattern1}&Fig.~\ref{fig:pattern1b}$^*$& 0.22& $>1$ \\
\,\,\,b. limestone (32)&&decreasing&Fig.~\ref{fig:pattern2}&Fig.~\ref{fig:pattern2b}$^*$&0.56 & $>1$ \\
\bottomrule
\end{tabular}}
\label{tab:30}
\end{table}

\begin{table}[!htbp]
\scalebox{0.99}{
\begin{tabular}
{ccccccc}
\cmidrule{2-7}
&\multicolumn{6}{c}{$\theta=37.5^\circ$}\\
\toprule
 \multirow{2}{*}{models} & $R_{pp}$&$R_{pp}$ mono-&expected & actual &\multirow{2}{*}{$e_{ir}$} & \multirow{2}{*}{$e_{rr}$}\\
  & at $e=0$ &tonicity&pattern & pattern &   & \\
 \cmidrule{1-7}
b. sand (E5)/&\multirow{2}{*}{positive}&non mono.&Fig.~\ref{fig:pattern1b}&Fig.~\ref{fig:pattern1b}$^*$&0.19 & --- \\
\,\,\,b. sand (E2) &&decreasing& Fig.~\ref{fig:pattern2b} &Fig.~\ref{fig:pattern2b}$^*$&$0.45$ &--- \\
\cmidrule{1-7}
g. sand (E5)/&\multirow{2}{*}{positive}&non mono.&Fig.~\ref{fig:pattern1b}&Fig.~\ref{fig:pattern1b}&0.04 & 0.23 \\
\,\,\,b. sand (E2) &&decreasing& Fig.~\ref{fig:pattern2b} &Fig.~\ref{fig:pattern2b}$^*$&0.45 &--- \\
\cmidrule{1-7}
b. limestone (1)/&\multirow{2}{*}{negative}&increasing&Fig.~\ref{fig:pattern2b}&Fig.~\ref{fig:pattern2b}$^*$&$0.47$ & --- \\
\,\,\,b. limestone (2) &&decreasing&Fig.~\ref{fig:pattern1b}&Fig.~\ref{fig:pattern1b}$^*$&$0.40$ &--- \\
\cmidrule{1-7}
g. limestone (1)/&\multirow{2}{*}{positive}&increasing&Fig.~\ref{fig:pattern1b}&Fig.~\ref{fig:pattern1b}$^*$&$0.37$ & --- \\
\,\,\,b. limestone (2) &&decreasing&Fig.~\ref{fig:pattern2b} &Fig.~\ref{fig:pattern2b}$^*$&$0.40$ &---\\
\cmidrule{1-7}
b. shale (B1)/&\multirow{2}{*}{positive}&increasing&Fig.~\ref{fig:pattern1b} &Fig.~\ref{fig:pattern1b}&0.43& ---\\
\,\,\,b. shale (B2)&& non mono.&Fig.~\ref{fig:pattern2b} &Fig.~\ref{fig:pattern2b}$^*$&0.16&---  \\
\cmidrule{1-7}
b. shale (G3)/&\multirow{2}{*}{negative}&non mono.&Fig.~\ref{fig:pattern2b} &Fig.~\ref{fig:pattern2b}$^*$& 0.27& --- \\
\,\,\,b. shale (G5)&&non mono.&Fig.~\ref{fig:pattern1b} &Fig.~\ref{fig:pattern1b}&0.10&$>1$ \\
\cmidrule{1-7}
b. shale (E1)/&\multirow{2}{*}{positive}&increasing&Fig.~\ref{fig:pattern1b}&Fig.~\ref{fig:pattern1b}& $>1$&---\\
\,\,\,b. shale (E5)&&decreasing&Fig.~\ref{fig:pattern2b}&Fig.~\ref{fig:pattern2b}$^*$& $>1$&--- \\
\cmidrule{1-7}
b. sand (E5)/&\multirow{2}{*}{positive}&non mono.&Fig.~\ref{fig:pattern1b}&Fig.~\ref{fig:pattern1b}$^*$&0.19&--- \\
\,\,\,b. shale (E5)&&decreasing&Fig.~\ref{fig:pattern2b}&Fig.~\ref{fig:pattern2b}$^*$&$>1$ & ---\\
\cmidrule{1-7}
g. sand (E5)/&\multirow{2}{*}{positive}&non mono.&Fig.~\ref{fig:pattern1b}&Fig.~\ref{fig:pattern1b}$^*$&0.04 &0.23  \\
\,\,\,b. shale (E5)&&decreasing&Fig.~\ref{fig:pattern2b}&Fig.~\ref{fig:pattern2b}$^*$&$>1$&--- \\
\cmidrule{1-7}
g. sand (G8)/&\multirow{2}{*}{positive}&increasing&Fig.~\ref{fig:pattern1b}&Fig.~\ref{fig:pattern1b}& 0.11& $>1$\\
\,\,\,b. sand (G8)&&non mono.&Fig.~\ref{fig:pattern2b}&Fig.~\ref{fig:pattern2b}$^*$&0.35&---\\
\cmidrule{1-7}
g. sand (G14)/&\multirow{2}{*}{negative}&non mono.&Fig.~\ref{fig:pattern2b}&Fig.~\ref{fig:pattern2b}$^*$&0.12&$>1$\\
\,\,\,g. sand (G16)&&decreasing&Fig.~\ref{fig:pattern1b}&Fig.~\ref{fig:pattern1b}& 0.11& $>1$ \\
\cmidrule{1-7}
g. coal (G31)/&\multirow{2}{*}{positive}&non mono.&Fig.~\ref{fig:pattern1b}&Fig.~\ref{fig:pattern1b}& 0.41&---  \\
\,\,\,b. coal (G31)&&non mono.&Fig.~\ref{fig:pattern2b}&Fig.~\ref{fig:pattern2b}$^*$& 0.78& --- \\
\cmidrule{1-7}
g. limestone (9)/&\multirow{2}{*}{positive}&increasing&Fig.~\ref{fig:pattern1b}&Fig.~\ref{fig:pattern1b}$^*$&0.11&$>1$ \\
\,\,\,g. limestone (10)&&decreasing&Fig.~\ref{fig:pattern2b}&Fig.~\ref{fig:pattern2b}$^*$&0.38 &---  \\
\cmidrule{1-7}
b. limestone (9)/&\multirow{2}{*}{positive}&increasing&Fig.~\ref{fig:pattern1b}&Fig.~\ref{fig:pattern1b}$^*$&0.38 &--- \\
\,\,\,g. limestone (10)&&decreasing&Fig.~\ref{fig:pattern2b}&Fig.~\ref{fig:pattern2b}$^*$&0.38&---  \\
\cmidrule{1-7}
b. limestone (22)/&\multirow{2}{*}{positive}&increasing&Fig.~\ref{fig:pattern1b}&Fig.~\ref{fig:pattern1b}$^*$& 0.52& --- \\
\,\,\,b. dolomite (23)&&decreasing&Fig.~\ref{fig:pattern2b}&Fig.~\ref{fig:pattern2b}$^*$&0.38  &---  \\
\cmidrule{1-7}
g. limestone (22)/&\multirow{2}{*}{positive}&increasing&Fig.~\ref{fig:pattern1b}&Fig.~\ref{fig:pattern1b}$^*$& 0.31&---   \\
\,\,\,b. dolomite (23)&&decreasing&Fig.~\ref{fig:pattern2b}&Fig.~\ref{fig:pattern2b}$^*$& 0.38&---  \\
\cmidrule{1-7}
g. dolomite (28)/&\multirow{2}{*}{negative}&increasing&Fig.~\ref{fig:pattern2b}&Fig.~\ref{fig:pattern2b}$^*$& 0.44& ---\\
\,\,\,g. dolomite (29)&&decreasing&Fig.~\ref{fig:pattern1b}&Fig.~\ref{fig:pattern1b}$^*$&0.21& ---  \\
\cmidrule{1-7}
b. dolomite (28)/&\multirow{2}{*}{negative}&increasing&Fig.~\ref{fig:pattern2b}&Fig.~\ref{fig:pattern2b}$^*$& 0.41& ---\\
\,\,\,g. dolomite (29)&&decreasing& Fig.~\ref{fig:pattern1b}&Fig.~\ref{fig:pattern1b}$^*$&0.21 &---  \\
\cmidrule{1-7}
b. dolomite (31)/&\multirow{2}{*}{positive}&increasing&Fig.~\ref{fig:pattern1b}&Fig.~\ref{fig:pattern1b}&0.46&--- \\
\,\,\,b. limestone (32)&&decreasing&Fig.~\ref{fig:pattern2b}&Fig.~\ref{fig:pattern2b}$^*$& 0.76&---  \\
\cmidrule{1-7}
g. dolomite (31)/&\multirow{2}{*}{positive}&increasing&Fig.~\ref{fig:pattern1b}&Fig.~\ref{fig:pattern1b}$^*$&0.35&---  \\
\,\,\,b. limestone (32)&&decreasing&Fig.~\ref{fig:pattern2b}&Fig.~\ref{fig:pattern2b}$^*$&0.76 & --- \\
\bottomrule
\end{tabular}}
\label{tab:375}
\end{table}

\begin{table}[!htbp]
\scalebox{0.99}{
\begin{tabular}
{ccccccc}
\cmidrule{2-7}
&\multicolumn{6}{c}{$\theta=45^\circ$}\\
\toprule
 \multirow{2}{*}{models} & $R_{pp}$& $R_{pp}$ mono-&expected & actual &\multirow{2}{*}{$e_{ir}$} & \multirow{2}{*}{$e_{rr}$}\\
  & at $e=0$ &tonicity&pattern & pattern &   & \\
 \cmidrule{1-7}
b. sand (E5)/& \multirow{2}{*}{positive}&increasing&Fig.~\ref{fig:pattern1b}&Fig.~\ref{fig:pattern1b}&$0.35$ & --- \\
\,\,\,b. sand (E2) &&decreasing& Fig.~\ref{fig:pattern2b} &Fig.~\ref{fig:pattern2b}&$0.69$ &$>1$ \\
\cmidrule{1-7}
g. sand (E5)/&\multirow{2}{*}{positive}&increasing&Fig.~\ref{fig:pattern1b}&Fig.~\ref{fig:pattern1b}$^*$&0.21 & --- \\
\,\,\,b. sand (E2) &&decreasing& Fig.~\ref{fig:pattern2b} &Fig.~\ref{fig:pattern2b}$^*$&0.69 &--- \\
\cmidrule{1-7}
b. limestone (1)/&\multirow{2}{*}{negative}&increasing&Fig.~\ref{fig:pattern2b} &Fig.~\ref{fig:pattern2b}$^*$&$0.71$ & --- \\
\,\,\,b. limestone (2) &&decreasing&Fig.~\ref{fig:pattern1b}&Fig.~\ref{fig:pattern1b}&$0.62$ &--- \\
\cmidrule{1-7}
g. limestone (1)/&\multirow{2}{*}{positive}&increasing&Fig.~\ref{fig:pattern1b} &Fig.~\ref{fig:pattern1b}&$0.58$ & --- \\
\,\,\,b. limestone (2) &&decreasing& Fig.~\ref{fig:pattern2b}  &Fig.~\ref{fig:pattern2b}$^*$&$0.62$ &--- \\
\cmidrule{1-7}
b. shale (B1)/&\multirow{2}{*}{negative}&increasing&Fig.~\ref{fig:pattern2b}&Fig.~\ref{fig:pattern2b}$^*$&0.66 & ---\\
\,\,\,b. shale (B2)&&non mono.&Fig.~\ref{fig:pattern1b}&Fig.~\ref{fig:pattern1b}&0.32&---  \\
\cmidrule{1-7}
b. shale (G3)/&\multirow{2}{*}{negative}&increasing&Fig.~\ref{fig:pattern2b}&Fig.~\ref{fig:pattern2b}$^*$&0.46& --- \\
\,\,\,b. shale (G5)&&non mono.&Fig.~\ref{fig:pattern1b}&Fig.~\ref{fig:pattern1b}&0.24&--- \\
\cmidrule{1-7}
b. shale (E1)/&\multirow{2}{*}{positive}&increasing&Fig.~\ref{fig:pattern1b}& Fig.~\ref{fig:pattern1b}&$>1$&--- \\
\,\,\,b. shale (E5)&&decreasing&Fig.~\ref{fig:pattern2b}&Fig.~\ref{fig:pattern2b}$^*$& $>1$&--- \\
\cmidrule{1-7}
b. sand (E5)/&\multirow{2}{*}{positive}&non mono.&Fig.~\ref{fig:pattern1b}&Fig.~\ref{fig:pattern1b}$^*$ & 0.35 &---\\
\,\,\,b. shale (E5)&&decreasing&Fig.~\ref{fig:pattern2b}&Fig.~\ref{fig:pattern2b}$^*$& $>1$ & ---\\
\cmidrule{1-7}
g. sand (E5)/&\multirow{2}{*}{positive}&non mono.&Fig.~\ref{fig:pattern1b}&Fig.~\ref{fig:pattern1b}$^*$&0.15&---  \\
\,\,\,b. shale (E5)&&decreasing&Fig.~\ref{fig:pattern2b}&Fig.~\ref{fig:pattern2b}$^*$&$>1$&---  \\
\cmidrule{1-7}
g. sand (G8)/&\multirow{2}{*}{positive}&increasing&Fig.~\ref{fig:pattern1b}&Fig.~\ref{fig:pattern1b}$^*$& 0.24& --- \\
\,\,\,b. sand (G8)&&decreasing&Fig.~\ref{fig:pattern2b}&Fig.~\ref{fig:pattern2b}$^*$&0.55&---  \\
\cmidrule{1-7}
g. sand (G14)/&\multirow{2}{*}{negative}&non mono.&Fig.~\ref{fig:pattern2b}&Fig.~\ref{fig:pattern2b}$^*$&0.26&---\\
\,\,\,g. sand (G16)&&decreasing&Fig.~\ref{fig:pattern1b}&Fig.~\ref{fig:pattern1b}& 0.24& --- \\
\cmidrule{1-7}
g. coal (G31)/&\multirow{2}{*}{positive}&increasing&Fig.~\ref{fig:pattern1b}&Fig.~\ref{fig:pattern1b}& 0.66&---  \\
\,\,\,b. coal (G31)&&decreasing&Fig.~\ref{fig:pattern2b}&Fig.~\ref{fig:pattern2b}$^*$& $>1$ & --- \\
\cmidrule{1-7}
g. limestone (9)/&\multirow{2}{*}{positive}&increasing&Fig.~\ref{fig:pattern1b}&Fig.~\ref{fig:pattern1b}$^*$& 0.25& ---  \\
\,\,\,g. limestone (10)&&decreasing&Fig.~\ref{fig:pattern2b}&Fig.~\ref{fig:pattern2b}$^*$&0.59&---  \\
\cmidrule{1-7}
b. limestone (9)/&\multirow{2}{*}{positive}&increasing&Fig.~\ref{fig:pattern1b}&Fig.~\ref{fig:pattern1b}&0.60 & --- \\
\,\,\,g. limestone (10)&&decreasing&Fig.~\ref{fig:pattern2b}&Fig.~\ref{fig:pattern2b}$^*$& 0.59&---  \\
\cmidrule{1-7}
b. limestone (22)/&\multirow{2}{*}{positive}&increasing&Fig.~\ref{fig:pattern1b}&Fig.~\ref{fig:pattern1b}&0.78 &---  \\
\,\,\,b. dolomite (23)&&decreasing&Fig.~\ref{fig:pattern2b}&Fig.~\ref{fig:pattern2b}$^*$& 0.60 & ---\\
\cmidrule{1-7}
g. limestone (22)/&\multirow{2}{*}{positive}&increasing&Fig.~\ref{fig:pattern1b}&Fig.~\ref{fig:pattern1b}$^*$& 0.50&---\\
\,\,\,b. dolomite (23)&&decreasing&Fig.~\ref{fig:pattern2b}&Fig.~\ref{fig:pattern2b}$^*$& 0.60 & ---  \\
\cmidrule{1-7}
g. dolomite (28)/&\multirow{2}{*}{negative}&increasing&Fig.~\ref{fig:pattern2b}&Fig.~\ref{fig:pattern2b}$^*$& 0.68 &--- \\
\,\,\,g. dolomite (29)&&decreasing&Fig.~\ref{fig:pattern1b}&Fig.~\ref{fig:pattern1b}&0.39& ---  \\
\cmidrule{1-7}
b. dolomite (28)/&\multirow{2}{*}{negative}&increasing&Fig.~\ref{fig:pattern2b}&Fig.~\ref{fig:pattern2b}$^*$ &0.63&---  \\
\,\,\,g. dolomite (29)&&decreasing&Fig.~\ref{fig:pattern1b}&Fig.~\ref{fig:pattern1b}$^*$&0.39 &---  \\
\cmidrule{1-7}
b. dolomite (31)/&\multirow{2}{*}{positive}&increasing&Fig.~\ref{fig:pattern1b}&Fig.~\ref{fig:pattern1b}&0.71 &--- \\
\,\,\,b. limestone (32)&&decreasing&Fig.~\ref{fig:pattern2b}&Fig.~\ref{fig:pattern2b}$^*$& $>1$&---  \\
\cmidrule{1-7}
g. dolomite (31)/&\multirow{2}{*}{positive}&increasing&Fig.~\ref{fig:pattern1b}&Fig.~\ref{fig:pattern1b}$^*$&0.56&--- \\
\,\,\,b. limestone (32)&&decreasing&Fig.~\ref{fig:pattern2b}&Fig.~\ref{fig:pattern2b}$^*$&$>1$ & --- \\
\bottomrule
\end{tabular}}
\label{tab:45}
\end{table}

\end{appendix}
\newpage
\section{Matlab code}\label{ap:ch8_two}
\lstset{language=Matlab,%
    breaklines=true,%
    morekeywords={matlab2tikz},
    keywordstyle=\color{blue},%
    morekeywords=[2]{1}, keywordstyle=[2]{\color{black}},
    identifierstyle=\color{black},%
    stringstyle=\color{mylilas},
    commentstyle=\color{mygreen},%
    showstringspaces=false,
    numbers=none,%
    numberstyle={\tiny \color{black}},
    numbersep=9pt, 
    emph=[1]{for,end,break},emphstyle=[1]\color{red}, 
}

\lstinputlisting{Rpp_approx_very_short.m}
\end{document}